\documentclass[11pt]{article}
\usepackage{graphicx}
\usepackage{subcaption}
\usepackage{todonotes}
\usepackage{mathtools}
\usepackage{soul}
\usepackage{amsthm}
\usepackage{amsmath,amssymb}
\usepackage{systeme}

\DeclareMathOperator*{\argmin}{arg\,min}
\usepackage{gensymb}
\usepackage[ruled,vlined]{algorithm2e}
\SetKwComment{Comment}{/* }{ */}
\SetKwInOut{Input}{input}\SetKwInOut{Output}{output}

\graphicspath{{./figs/}}
\usepackage[absolute]{textpos}

\usepackage{lineno}
\usepackage{fullpage,cite}
\usepackage[top=0.8in, bottom=0.9in, left=0.9in, right=0.9in]{geometry}

\usepackage[pdftex, plainpages = false, pdfpagelabels,
                 bookmarks=false,
                 bookmarksopen = true,
                 bookmarksnumbered = true,
                 breaklinks = true,
                 linktocpage,
                 pagebackref,
                 colorlinks = true,  % was true
                 linkcolor = blue,
                 urlcolor  = cyan,
                 citecolor = red,
                 anchorcolor = green,
                 hyperindex = true,
                 hyperfigures
                 ]{hyperref}

\def\bbR{\mathbb{R}}
\def\calD{\mathcal{D}}
\def\calB{\mathcal{B}}
\def\le{\text{LE}}
\def\inp{\text{IN}}
\def\live{\text{LIVE}}
\newtheorem{theorem}{Theorem}[section]
\newtheorem{lemma}[theorem]{Lemma}
\newtheorem{corollary}[theorem]{Corollary}
\newtheorem{observation}{Observation}
\numberwithin{algocf}{section}
\title{Dynamic 3D Convex Hulls Revisited and Applications\thanks{This paper was published in SODA 2026~\cite{ref:WangDy26}. After publication, an error was discovered in the main algorithm, which invalidates the primary result of the paper. See the appendix in the last page for an explanation.}
}
\author{Haitao Wang\thanks{Kahlert School of Computing,
University of Utah, Salt Lake City, UT 84112, USA. {\tt haitao.wang@utah.edu}}
}
\date{}

\begin{document}

% \begin{textblock}{5}(1,0.5)
% \noindent\Large {\bf APPENDIX}
% \end{textblock}

\maketitle

\vspace{-0.3in}
\begin{abstract}
Chan [J.~ACM, 2010] developed a data structure to maintain the convex hull of a dynamic set of points in 3D under insertions and deletions to answer certain queries on the convex hulls. The algorithm has been slightly improved by Kaplan, Mulzer, Roditty, Seiferth, and Sharir~[Discret. Comput. Geom, 2020] and by Chan [Discret. Comput. Geom, 2020], without altering the main algorithmic framework. The current best result supports each insertion in $O(\log^2 n)$ amortized time, each deletion in $O(\log^4 n)$ amortized time, and each extreme query in $O(\log^2 n)$ worst-case time (along with other query types). 
These results have numerous applications, notably in dynamic Euclidean nearest neighbor searching in 2D. In the dual setting, the problem becomes maintaining the lower envelope of a dynamic set of 3D planes for vertical ray-shooting queries. By developing randomized vertical shallow cutting algorithms for general distance functions, Kaplan, Mulzer, Roditty, Seiferth, and Sharir~[Discret. Comput. Geom, 2020] and Liu [SIAM J. Comput., 2022] extended Chan’s framework to maintain the lower envelope of a dynamic set of general 3D surfaces. The best known result in this setting achieves $O(\log^2 n)$ amortized expected time for insertions, $O(\log^4 n)$ amortized expected time for deletions, and $O(\log^2 n)$ worst-case time for vertical ray-shooting queries. As an immediate application, dynamic nearest neighbor searching under general distance functions (e.g., the $L_p$ metric or additively-weighted Euclidean distance) can be solved.

In this paper, we revisit Chan’s algorithmic framework and propose a modified version that reduces the deletion time to $O(\log^3 n\log \log n)$, while retaining the $O(\log^2 n)$ insertion time at the cost of an increased query time of $O(\log^3 n / \log \log n)$. Our result is particularly appealing in scenarios where the overall running time is dominated by the operation with the highest time complexity; in such cases, our approach offers an improvement of roughly a logarithmic factor over prior work based on Chan’s original framework. This improvement translates into faster algorithms for several fundamental problems, including maintaining the dynamic 2D bichromatic closest pair, computing convex layers of 3D points, maintaining the minimum spanning tree of a dynamic 2D point set, and maintaining dynamic disk graph connectivity, among others.

%If the edge weight is define as the minimum distance between two disks, then the previous best algorithm runs in $O(n\log^4 n)$ expected time using Chan's algorithmic framework. Replacing Chan's framework with ours, we immediately obtain an improved algorithm of $O(n\log^3 n\log\log n)$ expected time. If the edge weight of the disk graphs is defined as the distance of the disk centers,  

In particular, we consider the problem of computing shortest paths in weighted disk graphs. 
The previous best algorithm, due to An, Oh, and Xue [SoCG 2025], runs in $O(n\log^4 n)$ time. We present a new algorithm that achieves an improved expected running time of $O(n\log^3 n\log\log n)$. Note that our algorithm is not merely the result of plugging our new dynamic data structure into the previous approach. Rather, we propose a different method, in which the new data structure plays a critical supporting role.
\end{abstract}

{\em Keywords:} Dynamic data structures, convex hulls, lower envelopes, nearest neighbor search, disk graphs, shortest paths

\section{Introduction}
\label{sec:intro}
We consider the problem of maintaining the convex hull of a dynamic set of points in $\bbR^3$, supporting insertions, deletions, and extreme point queries; that is, given a query direction, return the most extreme vertex of the convex hull along that direction. The two-dimensional version of this problem has been extensively studied~\cite{ref:Overmarsma81,ref:ChanDy01,ref:BrodalDy00,ref:BrodalDy02}, culminating in a data structure by Brodal and Jacob~\cite{ref:BrodalDy02} that supports both updates and queries in $O(\log n)$ time.

The 3D version of the problem is significantly more challenging. In 2006, Chan~\cite{ref:ChanA10} presented the first data structure with polylogarithmic update and query times. Specifically, each insertion requires $O(\log^3 n)$ amortized expected time, each deletion takes $O(\log^6 n)$ amortized expected time, and each extreme point query can be answered in $O(\log^2 n)$ worst-case time. With a minor modification, Kaplan, Mulzer, Roditty, Seiferth, and Sharir~\cite{ref:KaplanDy20} improved the deletion time by a logarithmic factor. Later, using a new deterministic algorithm for computing vertical shallow cuttings developed by Chan and Tsakalidis~\cite{ref:ChanOp16}, Chan~\cite{ref:ChanDy20} made the entire framework deterministic and further reduced both insertion and deletion times by another logarithmic factor. Specifically, the amortized times for insertion and deletion became $O(\log^2 n)$ and $O(\log^4 n)$, respectively, while maintaining the $O(\log^2 n)$ worst-case query time. It is worth noting that all these results follow the same algorithmic framework introduced in~\cite{ref:ChanA10}.

These results have numerous applications. For instance, in the dual setting, the problem becomes the {\em dynamic lower envelope problem}: maintaining the lower envelope of a dynamic set of planes in $\mathbb{R}^3$ to support vertical ray-shooting queries, i.e., given a vertical query line $\ell$, report the intersection of $\ell$ with the lower envelope. Consequently, the time bounds described above apply directly to this problem. In fact, the algorithms in \cite{ref:ChanA10,ref:ChanDy20,ref:KaplanDy20} are developed directly on this problem. 
Moreover, via a well-known lifting transformation~\cite{ref:deBergCo08}, Euclidean nearest neighbor queries among a dynamic set of 2D points can be reduced to the 3D dynamic lower envelope problem, allowing the above results to be applied immediately. However, Chan’s original result does not extend directly to nearest neighbor searching under more general distance functions, such as the commonly used $L_p$ metric (for $p \in [1, \infty]$) and additively-weighted Euclidean distance.

Under general distance functions, the nearest neighbor searching problem reduces to the {\em dynamic surface lower envelope problem}: maintaining the lower envelope of a dynamic set of totally defined continuous bivariate functions of constant description complexity in $\mathbb{R}^3$, supporting vertical ray-shooting queries. A key component of Chan’s data structure~\cite{ref:ChanA10,ref:ChanDy20} is an algorithm for computing vertical shallow cuttings~\cite{ref:ChanOp16}, which, however, only applies to planes.
To handle general distance functions, Kaplan, Mulzer, Roditty, Seiferth, and Sharir~\cite{ref:KaplanDy20} developed a randomized vertical shallow cutting algorithm for general 3D surfaces. Incorporating their algorithm into Chan’s framework yields a solution to the dynamic surface lower envelope problem. Liu~\cite{ref:LiuNe22} later improved the shallow cutting algorithm, leading to the following performance: each insertion takes $O(\log^2 n)$ amortized expected time, each deletion takes $O(\log^4 n)$ amortized expected time, and each ray-shooting query takes $O(\log^2 n)$ worst-case time, assuming that the lower envelope of any subset of the functions has linear complexity. These complexities match those of Chan’s original results for planes, although the algorithms are randomized and the time bounds are in expectation.
As a result, dynamic nearest neighbor searching under general distance functions can now be addressed directly. This has many applications, particularly because the $L_p$ metric and the additively-weighted Euclidean metric are widely used in practice.

\subsection{Our results}

In many applications that rely on the above results, the overall runtime is dominated by the operation with the highest time complexity—namely, insertions, deletions, or queries. In such cases, reducing this bottleneck leads to an improvement in the total running time of the algorithm. This observation motivates the present work. Our goal is to reduce the highest time complexity, which is the $O(\log^4 n)$ time for deletions.

We revisit Chan’s data structure and propose a nontrivial modification to its framework. Our modified data structure improves the highest time complexity of Chan’s original structure by nearly a logarithmic factor. Specifically, it achieves $O(\log^2 n)$ amortized time for insertions, $O(\log^3 n \log \log n)$ amortized time for deletions, and $O(\log^3 n / \log \log n)$ worst-case time for queries. Intuitively, one may interpret this as shifting a $\log n / \log \log n$ factor from the deletion time to the query time.

Moreover, since the data structures for the surface case developed by Kaplan, Mulzer, Roditty, Seiferth, and Sharir~\cite{ref:KaplanDy20} and Liu~\cite{ref:LiuNe22} follow Chan’s framework, our modification applies there as well. The resulting data structure achieves the same time bounds as above, with the caveat that both the insertion and deletion times are expected rather than deterministic.

As dynamic nearest neighbor searching reduces to the dynamic lower envelope problem, our results yield improved data structures for nearest neighbor queries with the same time complexities. For the Euclidean metric, our data structure is deterministic; for general distance functions, it is randomized.

\paragraph{Applications.}
Our new data structures lead to improvements in algorithms and data structures for a variety of problems. In general, if a previous algorithm relies on Chan’s data structure~\cite{ref:ChanA10,ref:ChanDy20} or the surface case data structure~\cite{ref:KaplanDy20,ref:LiuNe22}, and if the overall time complexity is dominated by the most expensive operation among insertions, deletions, and queries, then replacing the original data structure with ours as a black box can likely improve the algorithm's time complexity by a $\log n / \log \log n$ factor.

Problems of this nature include dynamic 2D bichromatic closest pair, computing convex layers of 3D points, maintaining the minimum spanning tree of a dynamic 2D point set, computing minimum Euclidean bichromatic matchings, maintaining dynamic disk graph connectivity, and computing spanners for disk graphs, among others (see Section~\ref{sec:app} for the details).

\paragraph{Shortest paths in disk graphs.}
In particular, we consider the problem of computing shortest paths in {\em proximity graphs} defined on a set of disks in the plane. Each disk corresponds to a vertex, and two disks are connected by an edge if their minimum distance is at most a prescribed parameter $r \geq 0$; when $r = 0$, the graph becomes the standard disk graph. If edge weights are defined as the minimum distance between disks, the best known algorithm, due to Kaplan, Katz, Saban, and Sharir~\cite{ref:KaplanTh23}, runs in $O(n \log^4 n)$ expected time, relying on the dynamic surface lower envelope data structure~\cite{ref:KaplanDy20,ref:LiuNe22}. By replacing this data structure with ours, we immediately obtain an improved algorithm with expected running time $O(n \log^3 n \log \log n)$.

If, instead, edge weights are defined as the distance between disk centers, the best known algorithm, which is also from~\cite{ref:KaplanTh23}, runs in $O(n \log^6 n)$ expected time, again using the dynamic surface lower envelope structure. For the special case of disk graphs (i.e., $r = 0$), a recent result by An, Oh, and Xue~\cite{ref:AnSi25} at SoCG 2025 gives a new algorithm with running time $O(n \log^4 n)$.

By applying our new data structure to the algorithm of~\cite{ref:KaplanTh23}, we obtain an improved algorithm for the center-distance case with expected time $O(n \log^5 n \log \log n)$. Beyond that, we propose an entirely new algorithm, which runs in $O(n \log^3 n \log \log n)$ expected time, even improving upon the result of An, Oh, and Xue~\cite{ref:AnSi25} that works for the disk graph only. Our algorithm is based on several novel observations, is relatively simple, and may serve as a useful framework for solving shortest path problems in other variants of disk graphs.

\paragraph{Dynamic halfspace range reporting and $\boldsymbol{k}$-nearest neighbor searching.}
Our technique can also be used to improve existing results for certain query problems whose output size includes a factor of $k$, such as $k$-nearest neighbor searching and halfspace range reporting. Based on an observation by Chan~\cite{ref:ChanA10}, his dynamic framework can be extended to support {\em $k$-lowest-planes} queries (i.e., given a vertical query line, report the $k$ lowest planes intersecting it; note that vertical ray-shooting on the lower envelope is the special case $k = 1$). Using Chan’s original framework~\cite{ref:ChanA10,ref:ChanDy20}, such queries can be supported with $O(\log^2 n)$ amortized insertion time, $O(\log^4 n)$ amortized deletion time, and $O((k + \log n)\log n)$ query time.
By replacing Chan’s algorithm with ours, we achieve the same amortized insertion time of $O(\log^2 n)$, an improved amortized deletion time of $O(\log^3 n \log \log n)$, and a query time of $O((k + \log n)\log^2 n / \log \log n)$. 

Furthermore, using Chan’s framework, de Berg and Staals~\cite{ref:BergDy23} developed a data structure with query time $O(k + \log^2 n / \log \log n)$, amortized insertion time $O(\log^{3+\varepsilon} n)$, and amortized deletion time $O(\log^{5+\varepsilon} n)$, for any constant $\varepsilon > 0$.
We obtain a new result that achieves query time $O(k + \log^3 n / \log \log n)$, amortized insertion time $O(\log^2 n \log \log n)$, and amortized deletion time $O(\log^4 n \log \log n)$. These results apply to both Euclidean $k$-nearest neighbor searching among a dynamic set of 2D points and halfspace range reporting among a dynamic set of 3D points. 

In addition, similar improvements can be obtained for $k$-nearest neighbor searching under general distance functions.

\paragraph{Outline.} The rest of the paper is organized as follows. After introducing some notation and basic concepts in Section~\ref{sec:pre}, we present our dynamic data structure in Section~\ref{sec:main}. Section~\ref{sec:spdisk} gives our shortest path algorithm for proximity graphs of disks. In Section~\ref{sec:rangereport}, we discuss the dynamic $k$-lowest-planes queries and halfspace range reporting.

\section{Preliminaries}
\label{sec:pre}

For any two points $p$ and $q$ in the plane, we use $\lVert p - q \rVert$ to denote their Euclidean distance.

Let $H$ be a set of $n$ planes in $\mathbb{R}^3$. The {\em level} of a point $p \in \mathbb{R}^3$ is defined as the number of planes in $H$ that lie strictly below $p$. For an integer $k$, the {\em $(\leq k)$-level} consists of all points in $\mathbb{R}^3$ whose level is at most $k$. A {\em $(k, K)$-shallow cutting} for $H$ is a collection $\Xi$ of pairwise interior-disjoint cells $\triangle$ of constant complexity such that their union covers the $(\leq k)$-level of $H$, and the interior of each cell $\triangle$ intersects at most $K$ planes of $H$. The {\em size} of $\Xi$ is the number of cells it contains. For each cell $\triangle \in \Xi$, let $H_{\triangle}$ denote the set of planes in $H$ that intersect the interior of $\triangle$; this set is called the {\em conflict list} of $\triangle$.

We say that a cell $\triangle \in \Xi$ is {\em vertical} if it is a downward (semi-unbounded) tetrahedron consisting of all points that lie vertically below some trapezoid. A {\em vertical shallow cutting} is one in which every cell is vertical. For any $k \in [1, n]$, there exists a $(k, O(k))$-vertical shallow cutting of size $O(n/k)$~\cite{ref:MatousekRe92,ref:ChanRa00}; such a cutting can be computed in $O(n \log n)$ time~\cite{ref:ChanOp16}.

Let $\le(H)$ denote the lower envelope of $H$. Throughout this paper, a {\em vertical line} in $\mathbb{R}^3$ refers to a line parallel to the $z$-axis.

In the following discussion, all stated time complexities are deterministic unless explicitly noted as {\em expected time}, and are worst-case unless {\em amortized time} is specified.

\section{The dynamic lower envelope data structure}
\label{sec:main}

In this section, we present our data structure for the dynamic lower envelope problem, obtained by modifying Chan’s data structure~\cite{ref:ChanA10,ref:ChanDy20}. We begin by briefly reviewing Chan’s algorithm in Section~\ref{sec:chanreview}, and then describe our modifications and new result in Section~\ref{sec:new}. In Section~\ref{sec:surface}, we extend our result to the surface lower envelope setting. Finally, Section~\ref{sec:app} presents several immediate applications of our new data structure.

\subsection{A review of Chan's algorithm}
\label{sec:chanreview}

%We first briefly review Chan's algorithm~\cite{ref:ChanDy20,ref:ChanA10,ref:KaplanDy20} and then explain our modifications. 
Let $H$ be a set of $n$ planes in $\bbR^3$. We use $\calD(H)$ to denote Chan's data structure on $H$; it is constructed by a preprocesing algorithm which we call ``{\tt Chan-Preprocess($H$)}''. We explain the algorithm below. 

Let $b,c,c'$ be three constants that will be decided later. The algorithm has $t=\log_bn$ iterations. Initially, let $H_0=H$. Each $i$-th iteration, $1\leq i\leq t$, computes a $(n/b^i,cn/b^i)$-vertical shallow cutting $\Xi_i$ for $H_{i-1}$ with at most $c'b^i$ cells. Then, let $H_i$ be $H_{i-1}$ by removing those ``bad'' planes $h\in H_{i-1}$ that intersects more than $2cc't$ cells of $\Xi_1,\Xi_2,\ldots,\Xi_i$; note that these bad planes are still in $H_0,H_1,\ldots,H_{i-1}$. Furthermore, for each cell $\triangle\in \Xi_i$, we compute the conflict list $(H_i)_{\triangle}$ and initialize a number $\kappa_{\triangle}=0$. Note that the planes of $H_{i-1}\setminus H_i$ do not appear in any conflict list of $\Xi_i$ but still exist in the conflict lists of $\Xi_1,\Xi_2,\ldots,\Xi_{i-1}$. We construct a data structure for the last cutting $\Xi_t$ so that given a vertical line $\ell$, the cell of $\Xi_t$ intersecting $\ell$ can be computed efficiently. Since all cells of $\Xi_t$ are vertical, this can be done by first projecting $\Xi_t$ onto the $xy$-plane and then construct a planar point location data structure~\cite{ref:KirkpatrickOp83,ref:EdelsbrunnerOp86}. Then, given $\ell$, the cell of $\Xi_t$ intersecting $\ell$ can be computed in $O(\log n)$ time using a point location query. Note that the conflict list of each cell of $\Xi_t$ has $O(1)$ planes.
Finally, let $\calB(H)=H\setminus H_t$ (i.e., the set of bad planes); we call {\tt Chan-Preprocess($\calB(H)$)} recursively (until $|\calB(H)|=O(1)$).

It can be shown that $|\calB(H)|\leq n/2$~\cite{ref:ChanDy20}. Hence, {\tt Chan-Preprocess($H$)} will recur $O(\log n)$ times (sometimes we say it has $O(\log n)$ {\em recursion levels}). We say that all the cuttings $\Xi_1,\ldots, \Xi_t$ form a {\em substructure} of $\calD(H)$. Hence, {\tt Chan-Preprocess($H$)} constructs $O(\log n)$ substructures, each having $O(\log n)$ vertical shallow cuttings. All these substructures constitute $\calD(H)$. As shown in \cite{ref:ChanDy20}, there exist constants $b,c,c'$ so that all cuttings $\Xi_1,\ldots, \Xi_t$ in the first substructure can be constructed in $O(n\log n)$ time by using the shallow cutting algorithm of Chan and Tsakalidis~\cite{ref:ChanOp16}. As such, the total time of {\tt Chan-Preprocess($H$)} for constructing $\calD(H)$ satisfies the recurrence $T(n)=T(n/2)+O(n\log n)$, which solves to $T(n)=O(n\log n)$.  

We introduce some notation that will be used later. For each substructure $D$ of $\calD(H)$, we use $\inp(D)$ to denote the subset of the planes of $H$ that are initially used to construct $D$. After $D$ is constructed, a set $\calB(\inp(D))$ of bad planes are generated due to the construction of $D$; for simplicity, we let $\calB(D)=\calB(\inp(D))$. In addition, we let $\live(D)=\inp(D)\setminus \calB(D)$, which are called ``live'' planes in \cite{ref:ChanA10}. 
%we also say that these planes are ``stored'' in $D$. 
For example, the above described algorithm is for constructing the first level substructure 
$D$, which consists of the  shallow cuttings $\Xi_1,\ldots, \Xi_t$; we have $\inp(D)=H$, 
$\calB(D)=\calB(H)$, and $\live(D)=H\setminus \calB(D)$, which is actually $H_t$. Note that each plane is live in exactly one substructure $D$ of $\calD(H)$.
%and the plane appears in $O(\log n)$ conflict lists of the shallow cuttings of $D$. 

\paragraph{Insertions.}
To insert a plane $h$ to $H$, we use a function, called {\tt Chan-Insert($H,h$)}. It simply inserts $h$ to $\calB(H)$ recursively. When $|\calB(H)|$ reaches $3n/4$, we rebuild $\calD(H)$. Since there are at least $\Omega(n)$ updates before a rebuild occurs after the last rebuild, the amortized insertion time satisfies the recurrence $T(n)=T(3n/4)+O(P(n)/n)$, where $P(n)=O(n\log n)$ is the time for {\tt Chan-Preprocess($H$)}. The recurrence solves to $T(n)=O(\log^2 n)$. 

\paragraph{Deletions.}
To delete a plane $h$ from $H$, we call a function, {\tt Chan-Delete($H,h$)}, which works as follows. For any cell $\triangle\in \Xi_i$, for all $1\leq i\leq t$, such that $h\in (H_i)_{\triangle}$, we increment $\kappa_{\triangle}$ by one. If $\kappa_{\triangle}\geq n/b^{i+1}$, then for all planes $h'\in (H_i)_{\triangle}$ that are not in $\calB(H)$, re-insert $h'$ to $H$ by calling {\tt Chan-Insert($H,h'$)}, which effectively inserts $h'$ to $\calB(H)$. Finally, if $h$ is also in $\calB(H)$, then we call {\tt Chan-Delete$(\calB(H),h)$} recursively. 
We will analyze the deletion time later. 

\paragraph{Queries.}
Given a vertical line $\ell$, our goal is to compute its intersection with the lower envelope $\le(H)$ of $H$; equivalently, we seek the lowest plane of $H$ intersecting $\ell$. Instead of using Chan’s original query algorithm~\cite{ref:ChanDy20,ref:ChanA10}, we adopt the approach from Kaplan, Mulzer, Roditty, Seiferth, and Sharir~\cite{ref:KaplanDy20}, which can be easily extended to the more general surface setting discussed in Section~\ref{sec:surface}.

For each substructure $D$ of $\mathcal{D}(H)$, we locate the cell $\triangle$ in its last cutting that intersects $\ell$, using a point location data structure in $O(\log n)$ time. We then inspect the conflict list of $\triangle$: among all planes in the conflict list that have not been deleted, we select the one that intersects $\ell$ at the lowest point. Since the conflict list contains only $O(1)$ planes, this step takes $O(1)$ time.

Repeating this process for all $O(\log n)$ substructures yields $O(\log n)$ candidate planes in total, with an overall runtime of $O(\log^2 n)$. The plane among these candidates whose intersection with $\ell$ is lowest is returned as the final answer. We refer to~\cite{ref:KaplanDy20} for the correctness of this algorithm. The total query time is $O(\log^2 n)$. We denote this query procedure as {\tt KMRSS-Query($H,\ell$)}.

\paragraph{Deletion time.}
We now analyze the deletion time, which will help motivate our new approach. Suppose we delete a plane $h$ from $H$. According to {\tt Chan-Preprocess($H$)}, for each substructure $D$ of $\calD(H)$, $h$ appears in the conflict lists of at most $O(\log n)$ cells of all shallow cuttings of $D$. Therefore, in each substructure, there are $O(\log n)$ cutting cells $\triangle$ such that $\kappa_{\triangle}$ increments due to the deletion of $h$. It takes $n/b^{i+1}$ increments of $\kappa_{\triangle}$ to meet the condition $\kappa_{\triangle}\geq n/b^{i+1}$ so that $|(H_i)_{\triangle}|\leq cn/b^i$ planes are re-inserted to $\calB(H)$. Hence, deleting $h$ from each substructure of $\calD(H)$ triggers amortized $O(\log n)$ re-insertions. As such, the amortized deletion time satisfies the recurrence $T(n)=T(3n/4)+O(\log n)\cdot I(3n/4)=T(3n/4)+O(\log^3 n)$, where $I(n)$ is the insertion time. Hence, the recurrence solves to $T(n)=O(\log^4 n)$.

An alternative way to view the deletion time is that since deleting $h$ from each substructure of $\calD(H)$ triggers $O(\log n)$ amortized re-insertions and there are $O(\log n)$ substructures, the total amortized number of re-insertions is $O(\log^2 n)$. 
To make the analysis of our new algorithm in Section~\ref{sec:new} easier, we turn this amortized analysis into worst-case by slightly modifying the deletion algorithm (the preliminary version of \cite{ref:ChanA10} in SODA 2006 actually used this strategy). Specifically, instead of re-inserting all planes of $(H_i)_{\triangle}$ after $\kappa_{\triangle}$ reaches $n/b^{i+1}$, we re-insert a constant number of planes $(H_i)_{\triangle}$ whenever $\kappa_{\triangle}$ increments (see the preliminary version of \cite{ref:ChanA10} for a concrete constant number; roughly speaking, we use a constant so that the conflict list $(H_i)_{\triangle}$ becomes empty no later than when $\kappa_{\triangle}$ reaches $n/b^{i+1}$). In this way, deleting $h$ from each substructure of $\calD(H)$ triggers at most $O(\log n)$ re-insertions and the total number of re-insertions for all substructures is $O(\log^2 n)$ in the worst case. From now on, {\tt Chan-Delete($H,h$)} refers to this worst-case version. 

%\paragraph{Remark.}

\subsection{Our new algorithm}
\label{sec:new}

We begin with introducing an observation that motivates our new approach. Let $D_1,D_2,\ldots,D_l$ denote the substructures of $\calD(H)$ in the order they are constructed. Specifically, $D_1$ is constructed using $H$, $D_2$ is constructed using $\calB(H)$, etc. 
Recall that $|\calB(D)|\leq |\inp(D)|/2$ for each substructure $D$. 
%For each $D_i$, let $H'_i$ denote the subset of $H$ that is used to construct $D_i(H)$. According to Chan-Preprocess($H$), $H'_{i+1}=\calB(H_i')$, $|H_{i+1}'|\leq |H_i'|/2$, and $|H'_{t}|=O(1)$. 
For the last substructure $D_l$, we have $|\inp(D_l)|=O(1)$, and thus we simply let $D_l$ be the list of planes of $\inp(D_l)$. 

For each $1\leq j\leq l$, for each plane $h\in\live(D_j)$, we define $j$ as the {\em substructure-level} of $h$. 
If the substructure-level of a plane $h$ is $j$, then our observation is that $h$ cannot appear in the conflict lists of any substructures after $D_j$ and therefore we only need to delete $h$ from the first $j$ substructures. In the above deletion time analysis, we have used $O(\log n)$ as the upper bound for the number of substructures from which we need to delete a plane. While this upper bound is correct, it is a bit ``over-estimated''. Indeed, since $|\calB(D_j)|\leq |\inp(D)_j|/2$ for each substructure $D_j$ of $\calD(H)$ and $\calB(D_j)$ becomes $\inp(D_{j+1})$ for the next substructure $D_{j+1}$, 
$|\inp(D_j)|$ for $j=1,2,\ldots,l$ are geometrically decreasing, and therefore, most planes have small substructure-level numbers. Specifically, since at least half of the planes of $H$ are in $\live(D_1)$, at least half of the planes have substructure-level equal to $1$. Similarly, at least a quarter of the planes have substructure-levels equal to $2$. In general, at least $n/2^j$ planes have substructure-levels equal to $j$, for any $1\leq j\leq l$. Therefore, the total sum of the substructure-levels of all planes of $H$ is at most $2n$ (this resembles to the time analysis of the linear-time heap construction algorithm). Hence, the {\em average} substructure-level per plane is only $2$, rather than $\log n$. 

If we were to use $2$ instead of $\log n$ in the time analysis of {\tt Chan-Delete($H,h$)}, then the number of re-insertions triggered by a deletion would be $O(\log n)$ and consequently the amortized deletion time would be $O(\log^3 n)$. 
Our goal is to modify Chan’s algorithm in a way that converts this average-case analysis into a worst-case guarantee as much as possible.
It should be noted that although we are ultimately satisfied with amortized bounds, we still need to ensure that the time complexity of any sequence of updates remains bounded. 
The above ``average'' substructure-level argument alone does not imply a worst-case guarantee for any sequence of updates.

\paragraph{Necessity of the change.}  
We begin with an intuitive explanation of why a modification to Chan’s algorithm appears necessary to achieve improved deletion time, despite the favorable average-case behavior. While the \emph{average} substructure-level is small, it might be still possible to construct a sequence of updates in which each deleted plane has substructure level equal to or close to $l = \Theta(\log n)$. 
%This suggests that the worst-case deletion time remains $O(\log^4 n)$ unless the algorithm is fundamentally changed.

Consider the following scenario. Suppose we first delete a plane $h$ that is stored in the last substructure $D_l$ of $\calD(H)$, and has substructure level $l$. The deletion of $h$ may trigger a series of re-insertions that cause some substructure to be rebuilt, during which a plane $h'$, originally stored in a different substructure, is moved to $D_l$. If we next delete $h'$, it once again has substructure level $l$. By repeating this pattern, it might be possible to construct a sequence of $\Omega(n)$ deletions in which each deleted plane has substructure-level close to $l$.

This example demonstrates that it might be difficult to obtain a better deletion time analysis within the existing framework of Chan~\cite{ref:ChanA10,ref:ChanDy20}. To achieve an improved bound, a structural modification to Chan’s algorithm seems necessary.

\paragraph{Our solution.}
In the following, we describe our modifications to Chan’s algorithm. Our approach is motivated by the intuitive worst-case deletion sequence discussed above. The main issue is that re-insertions may trigger the rebuilding of high-level substructures in $\calD(H)$, causing subsequently deleted planes to again have high substructure-levels.
Our remedy is simple: redirect the insertions to new locations.

We now partition $H$ into two subsets $H^+$ and $H^-$, called the {\em main} and {\em auxiliary subsets} of $H$, respectively, such that $|H^-|< \alpha\cdot n/\log n$ for some constant $\alpha$ to be determined later (recall that $n=|H|$). Initially, $H^+=H$ and $H^-=\emptyset$.
We use $\calD^*(H)$ to denote our new data structure, which consists of $\calD(H^+)$ and $\calD^*(H^-)$. As such, $\calD^*(H)$ is defined recursively. Here, $\calD(H^+)$ refers to the data structure computed by calling {\tt Chan-Preprocess($H^+$)}. 
As the basic case, if $|H|=O(1)$, then $\calD^*(H)$ is simply the list of planes of $H$. Our key idea is that whenever we (re)-insert a plane, we insert it to the auxiliary subsets. The full details are described below.

\subsubsection{Insertions}
Our new insertion function, {\tt Insert($H,h$)}, is described in Algorithm~\ref{algo:insert}. We first call {\tt Insert$(H^-,h)$}, i.e., insert $h$ into $H^-$ recursively. If $|H^-| \geq \alpha n/\log n$, we perform a procedure {\tt Merge($H^-,H^+$)} to merge $H^-$ into $H^+$, after which $H^+=H$ and $H^-=\emptyset$, and $\calD^*(H)$ is ``reset'' to $\calD(H)$. 

\begin{algorithm}[h]
	\caption{Insert($H,h$)}
	\label{algo:insert}
	\SetAlgoNoLine
	%\KwIn{$P_{\alpha}$ and $H_{\alpha}=\{h_1,h_2,\ldots,h_n\}$}
	%\KwOut{An indirect solution $(P'_{\alpha},W_{\alpha})$} \BlankLine
	%\tcc{All indices below are understood modulo $n$.}
        Insert($H^-,h$)\;
        \If(\tcp*[f]{$n=|H|$}){$|H^-|\geq \alpha\cdot n/\log n$} 
        {
        Merge($H^-,H^+$)\tcp*[r]{merge $H^-$ into $H^+$} 
        }
\end{algorithm}

\begin{algorithm}[h]
	\caption{Merge($H^-,H^+$)}
	\label{algo:merge}
	\SetAlgoNoLine
	%\KwIn{$P_{\alpha}$ and $H_{\alpha}=\{h_1,h_2,\ldots,h_n\}$}
	%\KwOut{An indirect solution $(P'_{\alpha},W_{\alpha})$} \BlankLine
	%\tcc{All indices below are understood modulo $n$.}
        %$H^+\leftarrow H^+\cup H^-$ \tcp*[r]{$H^+=H$ after this line}         
        \eIf(\tcp*[f]{$n=|H|$}){$|H^-|+|\calB(H^+)|\geq 3n/4$}
        {
        Chan-Preprocess($H$)\;
        $H^+\leftarrow H$\;
        $H^-\leftarrow \emptyset$\;
        }
        {
        Merge($H^-,\calB(H^+)$)\tcp*[r]{merge $H^-$ into $\calB(H^+)$ recursively} 
        }        
\end{algorithm}

One way to implement the merge procedure is to insert every plane of $H^-$ into $D(H^+)$ one by one by calling {\tt Chan-Insert()}. Because {\tt Chan-Insert()} takes $O(\log^2 n)$ amortized time, the amortized time of our new {\tt Insert($H,h$)} satisfies the following recurrence: $T(n)=T(\alpha n/\log n)+O(\log^2 n)$, which solves to $T(n)=O(\log^3n/\log\log n)$. We can improve the time to $O(\log^2 n)$ by inserting the planes of $H^-$ into $D(H^+)$ all together in a ``batched'' fashion, as follows. 

Our merge procedure is described in Algorithm~\ref{algo:merge}, which runs recursively. 
%We start with setting $H^+=H^+\cup H^-$, implying that $H^+=H$. 
%Note that $\calB(H^+)$ is the set of ``bad'' planes produced by Chan-Preprocess($H^+$). 
%similar to $\calB(H)$ defined with respect to $H$ in Section~\ref{sec:chanreview}. 
If $|H^-|+|\calB(H^+)|\geq 3n/4$, then we call {\tt Chan-Preprocess($H$)} to rebuild $\calD(H)$, after which $\calD^*(H)$ is ``reset'' to $\calD(H)$, and we also set $H^+=H$ and $H^-=\emptyset$. 
Otherwise, we merge $H^-$ into $\calB(H^+)$ recursively, by calling {\tt Merge$(H^-,\calB(H^+))$}.  
%Finally, we set $H^-=\emptyset$.

\begin{lemma}\label{lem:insert-time}
The amortized insertion time is $O(\log^2 n)$.
\end{lemma}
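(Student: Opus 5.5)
We prove the bound with a credit (potential) argument over the recursive structure $\calD^*(H)$, charging all rebuild work to the planes that are inserted. The structure has nested levels: $\calD^*(H)$ contains $\calD^*(H^-)$, which contains $\calD^*((H^-)^-)$, and so on. Since each auxiliary set has size less than $\alpha n/\log n$, there are only $O(\log n/\log\log n)$ such levels. An inserted plane first travels down this chain to the deepest level, so we must charge it for work at each level it eventually reaches. Work arises in two places. The first is a call to {\tt Chan-Preprocess} on some set of size $m$, at cost $O(m\log m)$, triggered inside {\tt Merge}. The second is the recursion of {\tt Merge} itself down the $\calB$-chain of $\calD(H^+)$.

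First we bound the cost of one call {\tt Merge($H^-,H^+$)}. The recursion walks down the substructures $D_1,D_2,\ldots$ of $\calD(H^+)$, whose input sizes $|\inp(D_j)|$ decrease geometrically. It stops at the first $j$ with $|H^-|+|\calB(\inp(D_j))|\geq 3m_j/4$, where $m_j=|\inp(D_j)|+|H^-|$, and rebuilds there at cost $O(m_j\log m_j)$. The walk itself costs $O(\log n)$. We then argue exactly as in the analysis of {\tt Chan-Insert}. If $D_j$ was built from a set of size $m'$, then $|\calB|\le m'/2$ at construction time. The rebuild condition therefore forces $\Omega(m')$ planes to have been added to that part of the structure since its last rebuild. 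These are planes merged into $\calB$-subsets at or below $D_j$, through earlier merges or through the present $H^-$.

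Second, we attach credits to planes. Each plane receives $O(\log n)$ credits whenever it is merged one recursion level deeper in some $\calB$-chain. Each rebuild of size $m$ is then paid for by the $\Omega(m)$ merged planes, each contributing $O(\log m)$. The resulting recurrence for the cost of merging one plane into a Chan structure of size $n$ is $M(n)\le M(3n/4)+O(\log n)=O(\log^2 n)$. This is the same bound as {\tt Chan-Insert}, now achieved in batched form: a merge of $k$ planes costs $O(k\log^2 n)$ amortized, plus an additive $O(\log n)$ for the walk. The walk cost is absorbed because $k=|H^-|\geq\alpha n/\log n\geq 1$.

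Third, we sum over the auxiliary levels. A plane inserted at the top is first merged into the Chan structure at the deepest level where a merge is triggered. Later, each time an enclosing level overflows, the whole auxiliary set, including this plane, is merged into the next level up. Charged naively, each such merge costs $O(\log^2 n_i)$ per plane at a level of size $n_i$. The sizes satisfy $n_{i+1}\le \alpha n_i/\log n_i$, so $\log n_i$ decreases only additively by about $\log\log n$ per level. Summing $\log^2 n_i$ over $O(\log n/\log\log n)$ levels therefore gives $O(\log^3 n/\log\log n)$, not the desired bound.

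This summation is the main obstacle. To get around it, we refine the charge per level. When $H^-$, of size at most $\alpha n/\log n$, is merged into $\calD(H^+)$, the walk stops at the first substructure whose size is comparable to $|H^-|$ or whose bad set is full. The rebuild there costs $O(m\log m)$ with $m=\Theta(|H^-|+|\calB|)$. Rebuilds at the top substructures, of size $\Theta(n)$, occur only after $\Omega(n)$ planes have passed through, which is after $\Omega(\log n)$ merges of size $\Theta(n/\log n)$. We therefore charge level $i$ only $O(\log n_i\cdot\log(n_i/n_{i+1}))=O(\log n_i\log\log n_i)$ per plane for the recursion depth the merge actually uses, plus the cost of deeper rebuilds, which are paid by planes at lower levels. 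This yields the telescoping total $\sum_i O(\log n_i(\log n_i-\log n_{i+1}))=O(\log^2 n)$. If this refinement fails, the fallback is to bound the per-plane cost by $\sum_i \log^2 n_i$ and to tighten it using the fact that a plane sits at level $i$ only after the set there grows past $\alpha n_{i-1}/\log n_{i-1}$.
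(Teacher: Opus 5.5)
Your refined argument is correct and is essentially the paper's proof. The paper likewise charges each merged plane $O(\log n)$ per level of the \texttt{Merge} recursion, since a rebuild of size $m$ occurs only after $\Omega(m)$ planes have been merged in; it bounds the recursion depth by $O(\log\log n)$ because $|H^-|=\alpha n/\log n$ while the substructure input sizes halve, and solves $T(n)=T(\alpha n/\log n)+O(\log n\log\log n)=O(\log^2 n)$, which is the same as your telescoping sum $\sum_i \log n_i(\log n_i-\log n_{i+1})=O(\log^2 n)$. The naive $O(\log^3 n/\log\log n)$ detour and the closing fallback are unnecessary once the depth bound is stated up front.
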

\begin{proof}
We first analyze the time complexity of the merge procedure and then discuss the time of {\tt Insert($H,h$)}. 

Since $|\calB(H^+)|\leq n/2$ right after {\tt Chan-Preprocess($H^+$)}, each {\tt Chan-Preprocess($H^+$)} in the merge procedure is called after $\Omega(n)$ planes from $H^-$'s have been merged into $\calB(H^+)$. Since the procedure {\tt Chan-Preprocess($H^+$)} takes $O(n\log n)$ time, the amortized time of {\tt Chan-Preprocess($H^+$)} per plane of $H^-$ is $O(\log n)$. Hence, the amortized time of per plane of $H^-$ in the merge procedure satisfies the recurrence $T(n)=T(3n/4)+O(\log n)$, which solves to $T(n)=O(\log^2 n)$. However, we claim that the recursion depth is bounded by $O(\log\log n)$, which leads to $T(n)=O(\log n\log\log n)$. We prove the claim below. 

Indeed, recall that $\calD(H^+)$ consists of $O(\log n)$ substructures $D_j$, $1\leq j\leq O(\log n)$, and $|\inp(D_j)|\leq n/2^{j-1}$ holds for each $j$. 
Recall that $|H^-|=\alpha n/\log n$ when {\tt Merge$(H^-,H^+)$} is invoked in {\tt Insert($H,h$)}. 
Let $j$ be the largest index so that $|H^-|=\alpha n/\log n\leq |\inp(D_j)|$. Then, $\alpha n/\log n\leq n/2^{j-1}$ since $|\inp(D_j)|\leq n/2^{j-1}$. Hence, $j= O(\log\log n)$. Observe that the recursion of the merge procedure will stop no later than when it reaches the substructure of $D_j$, that is, the recursion has depth at most $O(\log \log n)$. 

In summary, the amortized time of the merge procedure per plane of $H^-$ is $O(\log n\log\log n)$. Hence, the amortized time of our new insertion algorithm {\tt Insert($H,h$)} satisfies the recurrence $T(n)=T(\alpha n/\log n)+O(\log n\log\log n)$, which solves to $T(n)=O(\log^2 n)$.
\end{proof}

\subsubsection{Deletions}
\label{sec:deletion}
To delete a plane $h$ from $H$, we call a function {\tt Delete$(H,h)$}. As $H^+$ and $H^-$ form a partition of $H$, $h$ is either in $H^+$ or in $H^-$. In the latter case, we call {\tt Delete$(H^-,h)$}, i.e., delete $h$ from $H^-$ recursively. In the former case, we call {\tt Chan-Delete($H^+,h$)}, i.e., apply the Chan's deletion algorithm, but with our new insertion algorithm, i.e., whenever we need to re-insert a plane $h'$, we call {\tt Insert($H,h'$)}, which effectively inserts $h'$ into $H^-$. See Algorithms~\ref{algo:delete} and \ref{algo:chandelete} for the pseudocode of {\tt Delete$(H,h)$} and {\tt Chan-Delete($H^-,h$)}, respectively. We will argue later that the amortized deletion time is $O(\log^3 n\log\log n)$.

\begin{algorithm}[h]
	\caption{Delete($H,h$)}
	\label{algo:delete}
	\SetAlgoNoLine
	%\KwIn{$P_{\alpha}$ and $H_{\alpha}=\{h_1,h_2,\ldots,h_n\}$}
	%\KwOut{An indirect solution $(P'_{\alpha},W_{\alpha})$} \BlankLine
	%\tcc{All indices below are understood modulo $n$.}
        \eIf{$h\in H^-$}
        {
        Delete($H^-,h$) \tcp*[r]{delete $h$ from $H^-$ recursively} 
        }
        {
        Chan-Delete($H^+,h$)\tcp*[r]{call Chan's deletion algorithm} 
        }
\end{algorithm}

\begin{algorithm}[h]
	\caption{Chan-Delete($H^+,h$)}
	\label{algo:chandelete}
	\SetAlgoNoLine
	%\KwIn{$P_{\alpha}$ and $H_{\alpha}=\{h_1,h_2,\ldots,h_n\}$}
	%\KwOut{An indirect solution $(P'_{\alpha},W_{\alpha})$} \BlankLine
	\tcc{Chan's deletion algorithm but using our new insertion algorithm}
        \For{\em each cell $\triangle$ of the shallow cuttings in the first substructure of $\calD(H^+)$ whose conflict list contains $h$}
        {
         %increment $\kappa_{\triangle}$\;
          remove $O(1)$ planes from the conflict list of $\triangle$\; %\tcp*[r]{we use a constant so that the conflict list becomes empty when $\kappa_{\triangle}$ reaches $n/b^{i+1}$, assuming that $\triangle$ is in a shallow cutting $\Xi_i$ discussed in Section~\ref{sec:chanreview}; see the preliminary version of \cite{ref:ChanA10} for details} 
          \For{\em each removed plane $h'$ that is still in $H^+$}
          {
            mark $h'$ as ``removed'' from $H^+$\;
            Insert($H,h'$)\tcp*[r]{this effectively inserts $h'$ to $H^-$} 
          }
        }
        \If{$h\in \calB(H^+)$}
        {
          Chan-Delete($\calB(H^+),h$)\tcp*[r]{delete $h$ recursively from $\calB(H^+)$}
        }
\end{algorithm}

\subsubsection{Queries}
%Before analyzing the deletion time, we discuss the query algorithm. 
Given a vertical line $\ell$, our query algorithm {\tt Query($H,\ell$)} is given in Algorithm~\ref{algo:query}. We first apply {\tt KMRSS-Query($H^+,\ell$)}, i.e., use the query algorithm described in Section~\ref{sec:chanreview} to find the lowest plane at $\ell$ among all planes in $H^+$. Then, we call the query algorithm recursively on $H^-$, which finds the lowest plane of $H^-$ at $\ell$. Among the two planes, we return the lower one at $\ell$. The correctness follows a similar argument as the KMRSS-Query algorithm~\cite{ref:KaplanDy20}. Since {\tt KMRSS-Query($H^+,\ell$)} takes $O(\log^2 n)$ time, the total query time satisfies the recurrence $T(n)=T(\alpha n/\log n)+O(\log^2 n)$, which solves to $T(n)=O(\log^3 n/\log\log n)$. 

\begin{algorithm}[h]
	\caption{Query($H,\ell$)}
	\label{algo:query}
	\SetAlgoNoLine
	%\KwIn{$P_{\alpha}$ and $H_{\alpha}=\{h_1,h_2,\ldots,h_n\}$}
	%\KwOut{An indirect solution $(P'_{\alpha},W_{\alpha})$} \BlankLine
	%\tcc{All indices below are understood modulo $n$.}
        $h^+\leftarrow$ KMRSS-Query($H^+,\ell$)\; 
        $h^-\leftarrow$ Query($H^-,\ell$)\; 
        \Return the lower one of $h^+$ and $h^-$ at $\ell$\;
\end{algorithm}

\subsubsection{Deletion time}
We now analyze the deletion time. 
Suppose we delete a plane $h$ from $H$. In each {\tt Chan-Delete()} procedure, excluding the recursive calls, $h$ appears in $O(\log n)$ conflict lists and therefore triggers $O(\log n)$ re-insertions as discussed in Section~\ref{sec:chanreview}. To analyze the deletion time, the key is to establish an upper bound on the number of {\tt Chan-Delete()} calls incurred by a single deletion.

Recall from Section~\ref{sec:chanreview} that the total sum of substructure-levels of all planes in $H^+$ within $\calD(H^+)$ is at most $2|H^+|$. This implies that if we were to delete all planes in $H^+$, and if $\calD(H^+)$ remained unchanged throughout the deletions, then the total number of \texttt{Chan-Delete()} calls would be at most $2|H^+|$, and thus the \emph{average} number of \texttt{Chan-Delete()} calls per deletion would be at most $2$.

However, in our framework, re-insertions triggered by deletions are directed into $H^-$, causing the size of $H^-$ to grow over time. Once $H^-$ exceeds a threshold, it is merged into $H^+$, thereby reconstructing $\calD(H^+)$ and resetting the structure. Therefore, a key question is: what is the \emph{maximum} number of deletions from $H^+$ that can occur before $H^-$ must be merged into $H^+$? Intuitively, the larger this number, the smaller the average number of \texttt{Chan-Delete()} calls per deletion, which leads to better amortized deletion time.

We start with the following lemma.

\begin{lemma}\label{lem:levelsum}
Suppose that $\calD(H^+)$ has just been constructed by a call to {\tt Chan-Preprocess($H^+$)}. Then, for any subset of $K$ planes from $H^+$ with $K \geq n / \log^c n$ for any constant $c$, the total sum of substructure-levels of the planes in the subset within $\calD(H^+)$ is $O(K \log \log n)$.
\end{lemma}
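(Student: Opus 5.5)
The plan is to bound the level-sum by splitting the substructure-levels at a threshold of order $\log\log n$. Planes below the threshold contribute $O(\log\log n)$ each. Planes above it are few, because the input sizes of the substructures shrink geometrically, and their total contribution can be bounded using the very fast decay of those sizes.

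The main fact comes from {\tt Chan-Preprocess($H^+$)}. Write $m=|H^+|\le n$ and let $D_1,\dots,D_l$ be the substructures of $\calD(H^+)$, so that $l=O(\log n)$. Since $\inp(D_{j+1})=\calB(D_j)$ and $|\calB(D_j)|\le |\inp(D_j)|/2$, we get $|\inp(D_j)|\le m/2^{j-1}\le n/2^{j-1}$. A plane has substructure-level at least $j$ exactly when it lies in $\inp(D_j)$. This is because the live sets partition $H^+$ and each plane not live in $D_{j'}$ passes on to $\inp(D_{j'+1})$. Therefore, for any subset $S\subseteq H^+$ with $|S|=K$, the number of planes of $S$ with level at least $j$ is at most $\min(K, n/2^{j-1})$.

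Next I write the level-sum as a tail sum:
\[
\sum_{h\in S}\mathrm{level}(h)=\sum_{j\ge1}\bigl|\{h\in S:\mathrm{level}(h)\ge j\}\bigr|\le \sum_{j=1}^{l}\min\bigl(K, n/2^{j-1}\bigr).
\]
I choose $j_0=\lceil \log_2(n/K)\rceil+1$. Because $K\ge n/\log^c n$, this gives $j_0\le c\log_2\log n+2=O(\log\log n)$. For $j\le j_0$ I use the bound $K$ per term, which contributes at most $j_0K=O(K\log\log n)$. For $j>j_0$ the terms $n/2^{j-1}$ form a geometric series. Its first term is at most $n/2^{j_0}\le K/2$, so the tail contributes at most $K$. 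Adding the two parts gives $O(K\log\log n)$, and the constant depends only on $c$.

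I do not expect a serious obstacle. The one point to state carefully is the claim that ``level $\ge j$'' is equivalent to membership in $\inp(D_j)$. This follows directly from the recursive definition $\inp(D_{j+1})=\calB(D_j)$ and the fact that each plane is live in exactly one substructure. A second point is that the bound uses $n\ge|H^+|$, so it holds whether $n$ denotes $|H|$ or $|H^+|$. The hypothesis ``just constructed'' is what guarantees the halving property $|\calB(D_j)|\le|\inp(D_j)|/2$ for every $j$.
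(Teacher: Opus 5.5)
Your proof is correct and rests on the same idea as the paper's: cut the substructure-levels at a threshold of order $\log\log n$ (you use $\log_2(n/K)+O(1)$; the paper uses the index $g$ defined via $K'=n/\log^c n$), charge at most the threshold to each plane below it, and use $|\inp(D_j)|\le n/2^{j-1}$ to control the planes above it. The only difference is bookkeeping. You write the level-sum as the layer-cake sum $\sum_j \min(K,n/2^{j-1})$, whereas the paper bounds the high part by the total level-sum $\sum_{j>g} jn/2^{j-1}$ of all of $\calB(D_g)$, which needs the extra estimate $\sum_{i>0} i/2^{i-1}\le 4$; your version is a bit cleaner and, since it is monotone in $K$, also gives Corollary~\ref{coro:levelsum} immediately.
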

\begin{proof}
Recall that $\calD(H^+)$ consists of $O(\log n)$ substructures $D_j$, $1 \leq j \leq O(\log n)$, with $|\inp(D_j)| \leq n / 2^{j-1}$. Let $g$ be the largest index such that
\[
\sum_{j \geq g} \frac{n}{2^{j-1}} \geq K' = \frac{n}{\log^c n}
\]
for some constant $c$. Then, $g = O(\log \log n)$. By definition, $\sum_{j > g} \frac{n}{2^{j-1}} < K'$. Recall that $\calB(D_g) = \bigcup_{j > g} \live(D_j)$, and that the substructure-level of every plane $h \in \live(D_j)$ is $j$. Since $|\live(D_j)|\leq |\inp(D_j)|\leq n/2^{j-1}$, the total sum of substructure-levels of all planes in $\calB(D_g)$ is
\[
\sum_{j > g} |\live(D_j)| \cdot j \leq \sum_{j > g} \frac{n \cdot j}{2^{j-1}}.
\]

We now show that $\sum_{j > g} \frac{n \cdot j}{2^{j-1}} = O(g \cdot K)$. Indeed,
\[
\begin{split}
\sum_{j > g} \frac{n \cdot j}{2^{j-1}} 
&= \sum_{j > g} \frac{n \cdot g}{2^{j-1}} + \sum_{j > g} \frac{n \cdot (j - g)}{2^{j-1}} \\
&= g \cdot \sum_{j > g} \frac{n}{2^{j-1}} + \frac{n}{2^g} \cdot \sum_{j > g} \frac{j - g}{2^{j - g - 1}} \\
&= g \cdot \sum_{j > g} \frac{n}{2^{j-1}} + \frac{n}{2^g} \cdot \sum_{i > 0} \frac{i}{2^{i-1}} \\
&\leq g \cdot \sum_{j > g} \frac{n}{2^{j-1}} + \frac{n}{2^g} \cdot 4.
\end{split}
\]
Since $n / 2^g < \sum_{j > g} \frac{n}{2^{j-1}} < K' \leq K$, it follows that $\sum_{j > g} \frac{n \cdot j}{2^{j-1}} = O(g \cdot K)$.

Let $S$ be the subset of $K$ planes of $H^+$ in the lemma statement. We partition $S$ into two subsets: $S_1 = S \cap \calB(D_g)$ and $S_2 = S \setminus S_1$. For $S_1$, the total sum of substructure-levels is at most that of all planes in $\calB(D_g)$, which is $O(g \cdot K)$. For $S_2$, since each plane belongs to some $\live(D_j)$ with $j \leq g$, its substructure-level is at most $g$, and thus the total contribution from $S_2$ is also $O(g \cdot K)$.

In summary, the total sum of substructure-levels over all $K$ planes in $S$ is $O(g \cdot K) = O(K \log \log n)$, completing the proof of the lemma.
\end{proof}

\begin{corollary}\label{coro:levelsum}
Suppose that $\calD(H^+)$ has just been constructed by a call to {\tt Chan-Preprocess($H^+$)}. Then, 
for any subset of $K$ planes of $H^+$ with $K\leq K'= n/\log^c n$ for any constant $c$, the total sum of substructure-levels of all planes of the subset within $\calD(H^+)$ is $O(K'\log\log n)$.
\end{corollary}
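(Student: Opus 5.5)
The plan is to reduce the corollary to Lemma~\ref{lem:levelsum} by padding the subset up to size exactly $K'$. Let $S$ be the given subset of $K\leq K'=n/\log^c n$ planes of $H^+$.

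\textbf{Step 1: pad the subset.} I would first note that we may assume $K'\leq |H^+|$. This holds because $H^+$ is the main subset and we consider it at the moment $\calD(H^+)$ has just been built. If $|H^+|<K'$, then every plane has substructure-level $O(\log|H^+|)=O(\log n)$, and the whole sum is bounded directly by the argument below, so nothing is lost. Otherwise, I extend $S$ to a superset $S'\subseteq H^+$ with $|S'|=K'$ by adding arbitrary further planes of $H^+$.

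\textbf{Step 2: apply Lemma~\ref{lem:levelsum}.} Since $|S'|=K'\geq n/\log^c n$, the lemma applies to $S'$ with the same constant $c$. It gives that the total sum of substructure-levels over $S'$ is $O(K'\log\log n)$.

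\textbf{Step 3: pass back to $S$.} Substructure-levels are positive integers, and $S\subseteq S'$. Hence the sum over $S$ is at most the sum over $S'$, which is $O(K'\log\log n)$, as claimed.

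\textbf{Where care is needed.} The only delicate point is the size of $H^+$ compared with $n$. Lemma~\ref{lem:levelsum} is phrased with $n$, and it uses the substructure sizes $|\inp(D_j)|\leq n/2^{j-1}$. If $n$ there denotes $|H^+|$, the padding is always possible, because $K'\leq n$. If $n=|H|$ and $H^+$ is smaller, I would instead rerun the lemma's proof directly. The planes of $\calB(D_g)$ have total level sum $O(gK')$ by the lemma's computation; that computation only needs $K'$ on the right-hand side, not $K$. The remaining planes have level at most $g=O(\log\log n)$ each, contributing $O(Kg)\leq O(K'g)$. Together these give $O(K'\log\log n)$ without padding, so this direct route is the fallback if padding is unavailable.
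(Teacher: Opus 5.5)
Your proposal is correct and matches the paper's proof. Like the paper, you pad $S$ with arbitrary planes of $H^+$ to a set $S'$ of size exactly $K'$, apply Lemma~\ref{lem:levelsum} to $S'$, and use $S\subseteq S'$ together with the nonnegativity of substructure-levels. Your side remark about $|H^+|$ versus $n$ is not needed in the paper's setting. The fallback you sketch (the lemma's computation already bounds $\calB(D_g)$ by $O(gK')$, and the remaining planes contribute at most $gK\le gK'$) is nonetheless valid, and it avoids padding altogether.
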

\begin{proof}
Let $S$ be a subset of $K$ planes of $H^+$. We arbitrarily add $K'-K$ additional planes of $H^+$ to $S$ to obtain a new subset $S'$ with $|S'|=K'$. By Lemma~\ref{lem:levelsum}, the total sum of substructure-levels of all planes of $S'$ in $\calD(H^+)$ is $O(K'\log\log n)$. Since $S\subseteq S'$, the total sum of substructure-levels of $S$ is at most that of $S'$. The corollary thus follows. 
\end{proof}

We will use Lemma~\ref{lem:levelsum} in the following way. Suppose we perform a sequence of $K = n / \log^c n$ deletions from $H^+$, during which the data structure $\calD(H^+)$ remains unchanged. Then, by Lemma~\ref{lem:levelsum}, the total number of \texttt{Chan-Delete()} calls incurred by these deletions is $O(K \log \log n)$. Recall that deleting a plane of substructure-level $j$ may trigger $O(j)$ \texttt{Chan-Delete()} calls; thus, if the total sum of substructure-levels for a subset of planes is $M$, the total number of \texttt{Chan-Delete()} calls due to deletions of these planes is $O(M)$.
Since a single \texttt{Chan-Delete()} call can trigger $O(\log n)$ re-insertions, the total number of re-insertions during the entire sequence is $O(K \log n \log \log n)$. Consequently, the amortized number of re-insertions per deletion is only $O(\log n \log \log n)$.

In the following, we aim to analyze the more general setting where insertions are interspersed with deletions, and $\calD(H^+)$ may be rebuilt due to (re-)insertions. We first consider the case of deletions from $H^+$, and then extend the analysis to deletions from $H^-$.

%to $H^-$, if we set $H^-=c^*n/\log n$ for some constant $c^*$ , then before $H^-$ needs to merged into $H^+$, an amortized $K=n/(\log^2 n\log\log n))$ planes are deleted from $H^+$. After $H^-$ is merged into $H^+$, we can follow the same analysis. Therefore, we obtain that the amortized Chan-Delete calls is $O(n/\log^2 n/K)$, which is $O(\log\log n)$.

%The above only discuss the situation that there are $K$ consecutive deletions of the planes in $H^+$. If during the above sequence, there are also deletions happen to planes of $H^-$, then those deletions will only delay the merge of $H^-$ and $H^+$ and therefore it still holds that the amortized Chan-Delete calls of each deletion of the plane of $H^+$ is $O(\log\log n)$.

%The above analyzes the deletion-only case for deletions of $H^+$. Following the same analysis, the amortized Chan-Delete calls of each deletion of the plane of $H^-$ in the deletion-only case (i.e., no new insertion or re-insertion happen to $H^-$) is also $O(\log\log n)$.
%Next, we consider the general case where insertions and deletions are interspersed.

\paragraph{Deletions of $\boldsymbol{H^+}$.}
Note that $\calD(H^+)$ can only change as a result of the procedure {\tt Merge($H^-,H^+$)}. 
We consider the time interval (i.e., the sequence of updates) between two consecutive {\tt Merge($H^-,H^+$)}, and denote this interval by $\xi$. Depending on whether at least $K=n/(\log^3 n\log\log n)$ deletions of $H^+$ occur during $\xi$, we distinguish two cases. 

\begin{enumerate}
    \item 
If there are at least $K$ deletions, then since no {\tt Merge($H^-,H^+$)} happens in $\xi$, by Lemma~\ref{lem:levelsum}, there are $O(K\log n\log\log n)$ re-insertions triggered by all these deletions. Hence, the amortized number of re-insertions triggered by each deletion is $O(\log n\log\log n)$. For reference purpose, we call this type of deletions {\em type-1} deletions. 

\item 
If there are fewer than $K$ deletions during $\xi$, we call these deletions {\em type-2} deletions. In this case, by Corollary~\ref{coro:levelsum}, at most $O(K\log n\log\log n)=O(n/\log^2 n)$ re-insertions are triggered by these deletions. To simplify the analysis, we assume there are at most $n/\log^2 n$ re-insertions. 
%(which is indeed true for a sufficiently large $n$). 
As $|H^-|$ needs to reach $\alpha\cdot n/\log n$ to start a merge, if we set $\alpha=3$, other than the re-insertions triggered by the above deletions, there are at least $2n/\log n$ additional insertions to $H^-$ (let $S$ denote the set of these insertions) in the time interval $\xi$ (these insertions are actually all new insertions, not re-insertions). 

To account for the cost of the type-2 deletions, we assign each insertion of $S$ one ``credit''. We charge each insertion of $S$ a fractional cost of $1/\log n$ credits (the reason for this choice will be explained shortly). In total, the insertions in $S$ contribute at least $2n / \log^2 n$ credits. These are sufficient to cover the at most $n / \log^2 n$ re-insertions caused by the above type-2 deletions and still leave one \emph{extra credit} per re-insertion for future use.

Thus, the cost of handling type-2 deletions can be fully amortized over the insertions in $S$. Consequently, each deletion from $H^+$ triggers an amortized $O(\log n \log \log n)$ number of re-insertions.
\end{enumerate}

\paragraph{Deletions of $\boldsymbol{H^-}$.}
Deletions from $H^-$ can be analyzed in a similar manner, though a more refined charging argument is required. In particular, the \emph{extra credits} reserved earlier will play a crucial role in the analysis.

Recall that our data structure $\calD^*(H)$ is defined recursively, where each level of recursion defines a \emph{Chan-structure}. For example, $\calD(H^+)$ constitutes the Chan-structure at the first level. The total number of recursion levels follows the recurrence $T(n) = T(\alpha n / \log n) + 1$, which solves to $T(n) = O(\log n / \log \log n)$.
For the $i$-th recursion level, let $H_i^+$ and $H_i^-$ denote the main and auxiliary subsets, respectively, and define $H_i = H_i^+ \cup H_i^-$. For instance, $H_1^+ = H^+$, $H_1^- = H^-$, and $H_1 = H$. By construction, the size of $H_i$ satisfies
$\alpha^{i-1} n / \log^{i-1} n < |H_i| \leq \alpha^i n / \log^i n$.
Our previous analysis focused on deletions from $H_1^+$, corresponding to the first recursion level. We now extend the analysis to other levels. For simplicity, we focus on deletions from $H_2^+$ at the second level; the extension to higher levels follows inductively using the same argument.

Recall from our earlier analysis of the first-level structure $\calD(H^+)$ that each new insertion $h \in S$ is assigned one credit, of which only $1 / \log n$ is used at that level. The remaining credits are reserved for other recursion levels.
For each level, we keep certain amount of partial credits for $h$ so that the total partial credits of all levels are no more than one. Specifically, for the $i$-th level, we reserve $1 / \log m$ credits for $h$, where $m = \alpha^{i-1} n / \log^{i-1} n$ is the lower bound of $|H_i|$.
The total credit usage across all levels satisfies the recurrence
$T(n) = T(\alpha \cdot n / \log n) + 1 / \log n$,
which solves to $T(n) \leq 1$, assuming an appropriate constant in the base case.
Thus, assigning one credit to each insertion $h \in S$ suffices to pay for all levels of the recursive data structure $\calD^*(H)$.

Also recall that in our earlier analysis, we reserved one extra credit for each re-insertion triggered by a type-2 deletion. As with new insertions, we partition this extra credit into partial credits across the levels of $\calD^*(H)$ using the same scheme described above. Similarly, for each re-insertion triggered by a type-1 deletion from $H^+$, we assign one credit and partition it in the same manner across the recursion levels. 

In what follows, we argue that the partial credits reserved for the second recursion level of $\calD^*(H)$ are sufficient to cover the cost of the type-2 deletions from $H_2^+$. Let $m = |H_2|$. Note that $\calD(H^+)$ can only change as a result of the procedure {\tt Merge($H_2^-,H_2^+$)}. Consider the time interval $\xi$ between two consecutive \texttt{Merge}($H_2^-, H_2^+$) operations. As before, we distinguish two cases based on whether the number of deletions from $H_2^+$ during $\xi$ is at least $K = m/(\log^3 m \log \log m)$.

\begin{enumerate}
    \item 
If there are at least $K$ deletions during $\xi$, then by Lemma~\ref{lem:levelsum} (which, although proved for $H_1^+$, applies equally to every $H_i^+$ by the same argument), the total number of re-insertions triggered by these deletions is at most $O(K \log m \log \log m)$. Consequently, the amortized number of re-insertions for each deletion is $O(\log m \log \log m)$. We refer to these as \emph{type-1} deletions at the second level.

\item 
If there are fewer than $K$ deletions, we refer to them as {\em type-2} deletions. In this case, by Corollary~\ref{coro:levelsum}, at most $O(K\log m\log\log m)=O(m/\log^2 m)$ re-insertions triggered by these deletions. To simplify the analysis, we assume there are at most $m/\log^2 m$ re-insertions. Since a merge of $H_2^-$ into $H_2^+$ is triggered only when
$|H_2^-|$ reaches $\alpha\cdot m/\log m$ for $\alpha =3$, other than the re-insertions triggered by the type-2 deletions of $H^+_2$, there are at least $2m/\log m$ additional insertions to $H_2^+$ during the interval $\xi$. Let $S_2$ be the set of these insertions. 

Each insertion in $S_2$ falls into one of the following three categories:  
(1) a new insertion,  
(2) a re-insertion triggered by type-1 deletions of $H_1^+$, or  
(3) a re-insertion triggered by type-2 deletions of $H_1^+$.  
For each such insertion, we have reserved $1 / \log m$ partial credits specifically for this recursion level. Thus, the total number of partial credits available for this level is at least $2m / \log^2 m$. This is sufficient to pay for the at most $m / \log^2 m$ re-insertions triggered by type-2 deletions of $H_2^+$, while still assigning one extra credit to each of these re-insertions. As before, each extra credit is recursively partitioned into partial credits for use in higher recursion levels.

In this way, the cost of type-2 deletions of $H_2^+$ is effectively covered ``for free'' by the new insertions and the re-insertions triggered by deletions of $H_1^+$.  
Additionally, for each re-insertion triggered by type-1 deletions of $H_2^+$, we assign one credit and partition it across the recursion levels of $\calD^*(H)$ as done previously.
\end{enumerate}

The above analysis shows that each deletion from $H^+_2$ triggers $O(\log m \log\log m)$ amortized re-insertions, which is bounded by $O(\log n \log\log n)$ since $m \leq n$.

This analysis extends inductively to all higher levels of recursion. Specifically, for deletions in $H^+_i$ at the $i$-th level, type-1 deletions can be handled analogously to the previous levels, while the cost of type-2 deletions is fully covered by the partial credits assigned to the new insertions and re-insertions triggered by deletions in the preceding levels. It is important to note that for any $i$-th recursion level in $\calD^*(H)$, the insertions (both new and re-insertions) used in the above charging scheme occur within the same time interval $\xi$ associated with that level. Their partial credits are exclusively used to pay for type-2 deletions of $H^+_i$ within $\xi$. Therefore, no partial credit is ever charged more than once.

To summarize, each deletion triggers an amortized $O(\log n \log\log n)$ number of re-insertions. Since each insertion costs $O(\log^2 n)$ amortized time, we conclude that the amortized deletion time is $O(\log^3 n \log\log n)$.

The following theorem summarizes our result. 

\begin{theorem}\label{theo:plane}
We can maintain a dynamic set of $n$ planes in $\bbR^3$, with $O(\log^2 n)$ amortized insertion time and $O(\log^3n\log\log n)$ amortized deletion time, so that the lowest plane intersecting a query vertical line can be computed in $O(\log^3 n/\log\log n)$ time.
\end{theorem}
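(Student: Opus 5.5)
The plan is to assemble Theorem~\ref{theo:plane} from the three pieces developed in this section, bounding each operation of the recursive structure $\calD^*(H)$ separately. Before that, I will check that the structure is well defined and always contains exactly the live planes. The recursion $|H^-|<\alpha n/\log n$ gives $O(\log n/\log\log n)$ levels. Every plane lies in exactly one $H_i^+$ or in the constant-size base list. Deleted planes, and planes marked ``removed'' from $H_i^+$ during \texttt{Chan-Delete}, are then handled only at their current location.

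\textbf{Insertions.} This is exactly Lemma~\ref{lem:insert-time}. Two points need care. First, I will justify that $\calB(H^+)$ stays below $3n/4$ between rebuilds even though merges feed planes into it. Second, I will note that $n$ changes over time; the standard global rebuild, which re-runs \texttt{Chan-Preprocess}$(H)$ whenever $|H|$ has changed by a constant factor, costs $O(\log n)$ amortized per update and fixes the thresholds.

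\textbf{Queries.} Correctness follows from the partition $H=H^+\cup H^-$. At each level, \texttt{KMRSS-Query} on $H^+$ returns the lowest live plane of $H^+$ at $\ell$ by the argument of~\cite{ref:KaplanDy20}, applied to $\calD(H^+)$. I have to argue that argument still holds after the modified deletions: planes removed from conflict lists are exactly the planes moved into $H^-$, so the invariant ``for the conflict list of the last-cutting cell hit by $\ell$, either the lowest plane of $\live(D)$ at $\ell$ is in it, or it is found in a later substructure'' is preserved. The time bound is the recurrence $T(n)=T(\alpha n/\log n)+O(\log^2 n)=O(\log^3 n/\log\log n)$.

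\textbf{Deletions.} I will use the charging argument just given. By Lemma~\ref{lem:levelsum} and Corollary~\ref{coro:levelsum}, between consecutive merges at level $i$, type-1 deletions trigger $O(\log n\log\log n)$ amortized re-insertions each. Type-2 deletions are paid by the $1/\log m$ partial credits held by the at least $2m/\log m$ other insertions into $H_i^-$ in that interval. The per-insertion credits over all levels sum to at most $1$, via $T(n)=T(\alpha n/\log n)+1/\log n$. Each re-insertion costs $O(\log^2 n)$ amortized by Lemma~\ref{lem:insert-time}, which gives $O(\log^3 n\log\log n)$. The cost of charging a constant credit to each new insertion adds only $O(1)$ per insertion, so the insertion bound is unchanged.

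I expect the main obstacle to be the step that applies Lemma~\ref{lem:levelsum} to the deletions in an interval $\xi$. The lemma assumes $\calD(H^+)$ is freshly built and unchanged, but partial merges into $\calB(H^+)$ rebuild later substructures, which can move planes to higher substructure-levels within $\xi$. This is exactly the phenomenon the modification was meant to prevent. I would first try to bound these partial rebuilds. By the proof of Lemma~\ref{lem:insert-time}, a merge rebuilds only from substructure $j=O(\log\log n)$ onward, so every plane's substructure-level after any such rebuild is at most $O(\log\log n)$ more than the level it would get in a fresh build. I would then apply Lemma~\ref{lem:levelsum} to the current structure, treating the rebuilt tail as freshly constructed on its own input. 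If that fails, I would try the cruder route of charging each partial rebuild to the $\Omega(n/\log n)$ inserted planes that caused it.
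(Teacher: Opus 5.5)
There is a genuine gap. It sits in what you treat as bookkeeping: the claim that every plane lies in exactly one $H_i^+$ (or in the base list), and, in the query paragraph, that ``planes removed from conflict lists are exactly the planes moved into $H^-$''.

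In \texttt{Chan-Delete}$(H^+,h)$, a plane $h'$ is removed from the conflict list of a single cell $\triangle$ and re-inserted into $H^-$. However, it still occurs in the conflict lists of other cells of $\calD(H^+)$, up to $O(\log n)$ per substructure in which it appears. So $H^+$ and $H^-$ overlap. When $h'$ is later deleted, \texttt{Delete} finds $h'\in H^-$ and recurses only there, so the counters $\kappa_{\triangle'}$ of those other cells are never incremented.

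The correctness of \texttt{KMRSS-Query} depends on exactly these counters. Let $h^*$ be the lowest plane at $\ell$. Every plane of $H_{i-1}$ below $h^*\cap\ell$ is deleted and lies in the conflict list of the cell $\triangle_{i-1}$ of $\Xi_{i-1}$ met by $\ell$, so it must have incremented $\kappa_{\triangle_{i-1}}$. This keeps the level of $h^*\cap\ell$ in $H_{i-1}$ below $n/b^i$, which forces it into the cell of $\Xi_i$, and so on down to the last cutting. With uncounted deletions, a plane $h^*\in H^+$ can escape the last-cutting cell of its substructure. It is not stored in $H^-$ either, so \texttt{Query} returns a wrong plane. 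Marking $h'$ as ``removed'' from $H^+$ does not help, because the other cells still undercount.

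This is precisely the flaw recorded in the paper's appendix, which withdraws Theorems~\ref{theo:plane} and~\ref{theo:surface}. So no completion of your outline establishes the statement as given.

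The natural repair restores correctness: also run \texttt{Chan-Delete} on every $\calD(H_i^+)$ that still holds the deleted plane. But then a single deletion pays substructure-level costs in Chan-structures at several recursion levels. Your type-1/type-2 charging counts each deletion only at the plane's current location, so it no longer gives $O(\log n\log\log n)$ re-insertions per deletion. The appendix reports that the bound then reverts to $O(\log^4 n)$, with no better analysis known.

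The obstacle you did flag is secondary by comparison. No rebuild of $\calD(H^+)$ happens inside an interval $\xi$, since partial rebuilds occur only within \texttt{Merge}$(H^-,H^+)$ at its endpoints. You are right, though, that $\calD(H^+)$ at the start of $\xi$ need not be freshly built. Lemma~\ref{lem:levelsum} would therefore have to be re-proved for partially rebuilt structures, a point the paper's own argument also passes over.
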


The space of our data structure is $O(n\log n)$ if we store the conflict lists of all shallow cuttings explicitly. Like Chan's original data structure, the space can be improved to $O(n)$ if we store the conflict lists implicitly by using a linear-space static data structure for halfspace range reporting~\cite{ref:AfshaniOp09}; see \cite{ref:ChanA10,ref:KaplanDy20} for the detailed explanations (Chan credited the idea to Afshani).

\paragraph{Non-vertical ray-shooting queries.}
Given a query line $\ell$ that is not necessarily vertical, the goal is to find the bottommost (or topmost) intersection between $\ell$ and $\le(H)$, the lower envelope of $H$. As discussed in~\cite{ref:ChanA10,ref:ChanDy20}, Chan's data structure for vertical ray-shooting queries can be slightly modified to support this type of non-vertical ray-shooting queries. For simplicity, we describe how to adapt the version of Chan’s data structure reviewed in Section~\ref{sec:chanreview}; our data structure can be modified in an analogous manner.

In the preprocessing algorithm {\tt Chan-Preprocess($H$)}, 
for each substructure $D$ of $\calD(H)$, we now explicitly compute the lower envelope of the live planes in $\live(D)$ and build a static data structure to support non-vertical ray-shooting queries on this envelope. Using the algorithm of Dobkin and Kirkpatrick~\cite{ref:DobkinDe90}, such a data structure can be constructed in $O(n\log n)$ time, and each query can be answered in $O(\log n)$ time. Since there are $O(\log n)$ substructures, the total preprocessing time remains $O(n\log n)$, which does not affect the overall time complexity of {\tt Chan-Preprocess($H$)}.

To process a query line $\ell$, we perform a non-vertical ray-shooting query in each substructure to identify the plane on the local envelope that intersects $\ell$. If the corresponding plane has not been deleted, we keep it as a \emph{candidate plane}. This yields $O(\log n)$ candidate planes; we return the lowest one intersecting $\ell$ as the tentative answer. Finally, we verify whether the intersection point actually lies on $\le(H)$, which can be done by using a vertical ray-shooting query. The overall query time remains $O(\log^2 n)$.
As discussed in~\cite{ref:ChanA10}, the original correctness proof for vertical ray-shooting still applies to this setting. 

Our new data structure can be adapted similarly. All time complexity analyses carry over, and the bounds stated in Theorem~\ref{theo:plane} remain valid. For convenience, we refer to this extension as the \emph{non-vertical version of Theorem~\ref{theo:plane}}.

The non-vertical ray-shooting queries are dual to {\em gift wrapping queries} among a dynamic set of 3D points (i.e., find the two tangent planes of the convex hull containing a query line outside the convex hull). 

\subsection{Dynamic lower envelopes for surfaces}
\label{sec:surface}
Our techniques can be extended to the setting of general surfaces in $\mathbb{R}^3$. Let $F$ be a set of $n$ totally defined continuous bivariate functions of constant description complexity in 3D, such that the lower envelope of any subset of functions in $F$ has linear complexity. The goal is to construct a data structure that maintains $F$ under insertions and deletions, and supports efficient vertical ray-shooting queries, i.e., given a vertical query line $\ell$, compute the intersection between $\ell$ and the lower envelope $\le(F)$.

Kaplan, Mulzer, Roditty, Seiferth, and Sharir~\cite{ref:KaplanDy20} employed the same algorithmic scheme as Chan's for the planes, but replaced the vertical shallow cutting algorithm for planes with a new (randomized) vertical shallow cutting algorithm for the surfaces in $F$. Vertical shallow cuttings for $F$ are defined similarly to those for planes; a notable difference is that each cell becomes a (semi-unboudned) ``pseudo-prism''. It is more challenging to compute shallow cuttings for $F$ than for planes. The authors~\cite{ref:KaplanDy20} proposed a randomized algorithm for that. Liu~\cite{ref:LiuNe22} subsequently developed an improved randomized algorithm that can compute a hierarchy of vertical shallow cuttings in expected $O(n\log n)$ time, which resembles the (deterministic) algorithm of Chan and Tsakalidis~\cite{ref:ChanOp16} for planes. Using Liu's shallow cutting algorithm and following Chan's framework, $F$ can be maintained dynamically with $O(\log^2 n)$ amortized expected insertion time and $O(\log^4 n)$ amortized expected deletion time, so that each vertical ray-shooting query can be answered in $O(\log^2 n)$ worst-case deterministic time. By switching to our new algorithmic framework, we can obtain the following result. 

\begin{theorem}\label{theo:surface}
The lower envelope of a set $F$ of $n$ totally defined continuous bivariate functions of constant description complexity in $\bbR^3$, such that the lower envelope of any subset of the functions has linear complexity, can be maintained dynamically with $O(\log^2 n)$ amortized expected insertion time and $O(\log^3 n\log\log n)$ amortized expected deletion time, so that the lowest function intersecting a query vertical line can be found in $O(\log^3 n/\log\log n)$ worst-case deterministic time.
\end{theorem}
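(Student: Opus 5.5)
The plan is to obtain Theorem~\ref{theo:surface} by rerunning the proof of Theorem~\ref{theo:plane} with Chan's shallow-cutting engine replaced by the surface version, and checking that each step of Section~\ref{sec:new} uses only properties that carry over. Concretely, in {\tt Chan-Preprocess} I would replace the Chan--Tsakalidis algorithm by Liu's randomized algorithm~\cite{ref:LiuNe22}. It builds the hierarchy $\Xi_1,\ldots,\Xi_t$ of vertical shallow cuttings, whose cells are semi-unbounded pseudo-prisms, together with their conflict lists, in $O(n\log n)$ expected time. I would then verify that this substructure has the combinatorial properties the framework needs:
\begin{itemize}
\item each function lies in $O(\log n)$ conflict lists after bad functions are removed;
\item $|\calB(F)|\le n/2$, at least in expectation, which is how~\cite{ref:KaplanDy20,ref:LiuNe22} establish it;
\item the last cutting has $O(1)$-size conflict lists and supports point location in the $xy$-projection in $O(\log n)$ deterministic time.
\end{itemize}
These are exactly the facts invoked in Section~\ref{sec:chanreview}. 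Hence {\tt Chan-Preprocess} costs $O(n\log n)$ expected time, and {\tt KMRSS-Query} is correct and takes $O(\log^2 n)$ deterministic time, as shown in~\cite{ref:KaplanDy20}.

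With this black box in place, I would define $\calD^*(F)$ exactly as before, with main and auxiliary subsets $F^+,F^-$ and $|F^-|<\alpha n/\log n$, and keep Algorithms~\ref{algo:insert}--\ref{algo:query} unchanged. The query bound $T(n)=T(\alpha n/\log n)+O(\log^2 n)=O(\log^3 n/\log\log n)$ is then worst-case deterministic, because no randomness enters a query. The insertion analysis of Lemma~\ref{lem:insert-time} goes through after replacing ``$O(n\log n)$ time'' by ``$O(n\log n)$ expected time''. By linearity of expectation, the total expected rebuild cost over an update sequence is still charged at $O(\log n)$ per merged function per level. The recursion depth is still $O(\log\log n)$, because $|\inp(D_j)|\le n/2^{j-1}$ holds deterministically once we enforce it. This gives $O(\log^2 n)$ amortized expected insertion time.

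For deletions, Lemma~\ref{lem:levelsum} and Corollary~\ref{coro:levelsum} are purely combinatorial given $|\inp(D_{j+1})|\le|\inp(D_j)|/2$. The type-1/type-2 charging argument uses only two facts: each {\tt Chan-Delete} call triggers $O(\log n)$ re-insertions, and merges occur only when $|F^-|$ reaches the threshold. So each deletion triggers $O(\log n\log\log n)$ amortized re-insertions, and each re-insertion costs $O(\log^2 n)$ amortized expected time. Combining these gives $O(\log^3 n\log\log n)$ amortized expected deletion time.

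The main obstacle I expect is converting the expected guarantee $|\calB(F)|\le n/2$ into the deterministic geometric decrease that the level-sum and recursion-depth arguments rely on. My first attempt would be to check $|\calB(F)|\le n/2$ after each substructure is built and rebuild the substructure if the check fails. By Markov's inequality the check succeeds with constant probability, so the expected number of retries is $O(1)$ and the expected preprocessing time stays $O(n\log n)$. After this step every structural inequality holds with certainty, and the only remaining randomness is in running times, where linearity of expectation handles the amortization.
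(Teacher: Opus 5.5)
\textbf{There is a genuine gap, and it is the flaw the paper itself acknowledges.} Your reduction is the paper's own: swap the Chan--Tsakalidis cuttings for Liu's randomized hierarchy and rerun Section~\ref{sec:new} unchanged. Your Las Vegas retry, which makes $|\calB(F)|\leq n/2$ hold with certainty, is a sound way to handle the expectation issue the paper glosses over. It is not the real obstacle, though. The gap is in the step ``keep Algorithms~\ref{algo:insert}--\ref{algo:query} unchanged.'' This is exactly the flaw documented in the note at the end of the paper, which withdraws Theorems~\ref{theo:plane} and~\ref{theo:surface}.

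\textbf{Why the unchanged algorithms are incorrect.} When \texttt{Chan-Delete}$(F^+,h)$ flushes a function $h'$ from the conflict list of one cell and calls \texttt{Insert}$(F,h')$, the function $h'$ enters $F^-$. It still sits in the conflict lists of other cells of $\calD(F^+)$, so $F^+$ and $F^-$ are no longer disjoint. \texttt{Delete} (Algorithm~\ref{algo:delete}) assumes they are. A later deletion of $h'$ therefore only recurses into $F^-$, and the counters $\kappa_\triangle$ of the cells of $\calD(F^+)$ that still contain $h'$ are never incremented. The correctness argument behind \texttt{KMRSS-Query} needs every deleted function meeting a cell to be counted in that cell's counter. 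That counting is what keeps the point where the current lowest function (if live in $D$) meets $\ell$ inside the cells of $D$'s cuttings, down to the last cutting. The query then fails in one of two ways:
\begin{itemize}
\item If the stale copy of $h'$ is not filtered at query time, a deleted function can be reported.
\item If it is filtered, the counters undercount deletions. The true lowest function is then no longer guaranteed to appear in the last-cutting cell at $\ell$ of the substructure where it is live.
\end{itemize}
Either way, \texttt{Query}$(F,\ell)$ can return a wrong answer.

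\textbf{Why the obvious repair does not rescue the bound.} The repair is: when $h\in F^+\cap F^-$, call \texttt{Chan-Delete}$(F^+,h)$ and then \texttt{Delete}$(F^-,h)$. This restores correctness, but it breaks the accounting you reuse. The deletion analysis charges each deletion to exactly one Chan-structure, namely the one in whose $F_i^+$ the function lies. That analysis consists of Lemma~\ref{lem:levelsum} and Corollary~\ref{coro:levelsum} applied per interval $\xi$, plus the type-1/type-2 credits. With the repair, a single deletion runs \texttt{Chan-Delete} in every recursion level of $\calD^*(F)$ where a stale copy of $h$ survives. The best bound known for the repaired structure is $O(\log^4 n)$ amortized deletion time, no better than~\cite{ref:KaplanDy20,ref:LiuNe22}.

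So your argument, like the paper's, does not establish $O(\log^3 n\log\log n)$ deletions together with correct $O(\log^3 n/\log\log n)$ queries. Closing the gap would need a new idea for handling re-inserted functions that remain in $\calD(F^+)$.
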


\subsection{Applications}
\label{sec:app}

In this section, we discuss several immediate applications of Theorems~\ref{theo:plane} and~\ref{theo:surface}. Most of these are straightforward replacements of earlier results that used the previous data structures~\cite{ref:ChanA10,ref:ChanDy20,ref:KaplanDy20,ref:LiuNe22} with our new results in the two theorems. In general, applying Theorem~\ref{theo:plane} yields deterministic results, while applying Theorem~\ref{theo:surface} yields randomized results.

%Note that this application list is not supposed to be exhaustive. Indeed, we only pick applications where our new results lead to improvements of roughly a logarithmic factor. 

\paragraph{Nearest neighbor queries in a dynamic set of 2D points.}
The problem is to maintain a dynamic set $S$ of $n$ points in $\bbR^2$ under insertions and deletions, while supporting nearest neighbor queries: given a query point $q$, return the point in $S$ closest to $q$. Using a well-known lifting transformation~\cite{ref:deBergCo08}, this problem under the Euclidean distance reduces to the setting of Theorem~\ref{theo:plane}, and thus inherits the same time complexities. For general distance functions, such as the $L_p$ metric for any $p \in [1, \infty]$ or the additively weighted Euclidean distance, the problem reduces, as discussed in~\cite{ref:KaplanDy20}, to the setting of Theorem~\ref{theo:surface}. Consequently, we obtain the same time bounds as in that theorem.

%Note that here and in the following we assume the general distance metric has the property that the functions de

\paragraph{Maintaining dynamic 2D bichromatic closest pair (BCP).}
In this problem, we maintain two dynamic point sets $P$ and $Q$ in $\bbR^2$, each supporting insertions and deletions, with the goal of maintaining the closest pair $(p, q)$ where $p \in P$ and $q \in Q$. Chan~\cite{ref:ChanDy20} proposed a data structure consisting of a heap along with two dynamic lower envelope structures $\calD(P)$ and $\calD(Q)$, achieving $O(\log^2 n)$ amortized insertion time and $O(\log^4 n)$ amortized deletion time, while reporting the closest pair in $O(1)$ time after each update. By replacing $\calD(P)$ and $\calD(Q)$ with our new data structures $\calD^*(P)$ and $\calD^*(Q)$ in a black-box manner, we improve the deletion time to $O(\log^3 n \log \log n)$ while preserving the insertion time and query time. If distances are measured using the Euclidean metric, we obtain a deterministic result via Theorem~\ref{theo:plane}. For general distance metrics, we obtain a randomized result via Theorem~\ref{theo:surface}.

As discussed in \cite{ref:ChanDy20}, the same time complexities can also be achieved for a symmetric problem of maintaining the diameter of a dynamic set of 2D points. 

\paragraph{Computing convex layers of 3D points.}
Given a set $S$ of $n$ points in $\bbR^3$, the problem is to compute all convex layers of $S$. In two dimensions, this problem can be solved in $O(n \log n)$ time~\cite{ref:ChazelleOn85}. For the 3D version, Agarwal and Matou\v{s}ek~\cite{ref:AgarwalDy95} proposed an algorithm whose runtime is dominated by maintaining a dynamic set of planes in $\bbR^3$ under $O(n)$ insertions, deletions, and non-vertical ray-shooting queries. Using Chan’s dynamic data structure~\cite{ref:ChanDy20}, this yields an $O(n \log^4 n)$ time algorithm. By applying our non-vertical variant of Theorem~\ref{theo:plane}, we obtain an improved algorithm with runtime $O(n \log^3 n \log \log n)$.

\paragraph{Maintaining the width of a semi-dynamic 2D point set.}
The problem is to maintain the width of a semi-dynamic (insertion or deletion only) point set in 2D. Eppstein~\cite{ref:EppsteinIn00} reduced each insertion or deletion to an amortized number of $O(\log n)$ insertions, deletions, and vertical ray-shooting queries in a dynamic set of planes in $\bbR^3$. Using Chan’s data structure~\cite{ref:ChanDy20}, each update can be handled in $O(\log^4 n)$ time, which is dominated by the deletion time. By applying our new result in Theorem~\ref{theo:plane}, we can improve the amortized update time to $O(\log^3 n \log\log n)$ for maintaining the width.

\paragraph{Maintaining all-nearest-neighbors graph of a dynamic 2D point set.}
For a 2D point set $S$, its \emph{all-nearest-neighbors graph} has $S$ as its vertex set, and there is a directed edge from $p$ to $q$ if $q$ is the nearest neighbor of $p$ in $S$. The problem is to maintain this graph as $S$ changes due to insertions and deletions. Chan~\cite{ref:ChanSe03} reduces the problem to vertical ray-shooting queries on the lower envelope of a dynamic set of planes in $\bbR^3$, so that each update to the graph can be handled within the update time of the dynamic lower envelope. Thus, using Chan's algorithm~\cite{ref:ChanDy20}, the graph can be maintained in $O(\log^4 n)$ amortized time per update. By applying Theorem~\ref{theo:plane}, we obtain an improved amortized update time of $O(\log^3 n \log\log n)$.

\paragraph{Maintaining geometric MST of a dynamic 2D point set.}
The problem is to maintain a geometric minimum spanning tree (GMST) for a dynamic set $S$ of points in the plane, under insertions and deletions, depending on the chosen distance metric. Following Eppstein~\cite{ref:EppsteinDy95}, the problem reduces to dynamic BCP problem. Specifically, suppose that we have a dynamic 2D BCP data structure with $t_{bcp}$ update time and a dynamic data structure for maintaining MST of general graphs with $t_{mst}$ edge update time (i.e., for insertions or deletions of edges); then they can be combined to yield a dynamic GMST data structure with $t_{gmst}=O((t_{bcp}+t_{mst})\log^2 n)$ amortized update time. 

The current best bound for $t_{mst}$ is $O(\log^4n/\log\log n)$~\cite{ref:HolmFa15}. If the Euclidean distance is used, Chan's data structure~\cite{ref:ChanDy20} gives $t_{bcp}=O(\log^4 n)$, which leads to $t_{gmst}=O(\log^6 n)$. By applying our improved dynamic Euclidean BCP data structure with $t_{bcp}=O(\log^3 n\log\log n)$, we improve this to $t_{gmst}=O(\log^6n/\log\log n)$. 
If a general distance metric is used (e.g., $L_p$ or additively-weighted Euclidean), previous work~\cite{ref:KaplanDy20,ref:LiuNe22} achieves $t_{bcp}=O(\log^4 n)$ expected time, resulting in $t_{gmst}=O(\log^6 n)$ expected time. With our new result, we have $t_{bcp}=O(\log^3 n\log\log n)$ and $t_{gmst}=O(\log^6 n/\log\log n)$ expected time.

If only deletions are allowed, then $t_{mst}$ can be bounded by $O(\log^2n)$~\cite{ref:HolmPo01}. Following the above analysis and using our new result, we have $t_{gmst}=O(\log^5 n\log\log n)$, which improves the previous bound of $O(\log^6n)$ by roughly a logarithmic factor. 

\paragraph{Computing minimum Euclidean bichromatic matching.}
Given two point sets $P$ and $Q$ in $\bbR^2$, each of size $n$, a \emph{minimum Euclidean bichromatic matching} is a set $M$ of $n$ line segments, each connecting a point in $P$ to a point in $Q$, such that the total length of the segments in $M$ is minimized. This problem can be reduced to $O(n^2)$ updates to a dynamic bichromatic closest pair (BCP) data structure under the additively-weighted Euclidean distance~\cite{ref:AgarwalVe99,ref:VaidyaGe89}. Using the previous result with an expected $O(\log^4 n)$ update time for dynamic BCP~\cite{ref:KaplanDy20,ref:LiuNe22}, an algorithm with expected runtime $O(n^2 \log^4 n)$ can be obtained. By applying our new dynamic BCP result with expected $O(\log^3 n \log\log n)$ update time, we obtain an improved algorithm with expected runtime $O(n^2 \log^3 n \log\log n)$.

\paragraph{Dynamic disk graph connectivity.}
Let $S$ be a set of disks in the plane. The \emph{disk graph} $G(S)$ is defined with $S$ as its vertex set, where two disks are connected by an edge if they intersect. A \emph{connectivity query} asks whether two given query disks are connected by a path in $G(S)$. The problem is to maintain $G(S)$ under insertions and deletions of disks in $S$, while supporting efficient connectivity queries.

If only insertions are allowed, Kaplan, Klost, Knorr, Mulzer, and Roditty~\cite{ref:KaplanIn24} obtained the following result. Suppose we have a dynamic nearest neighbor data structure for 2D points under additively-weighted Euclidean metric, with $t_i$ insertion time, $t_d$ deletion time, and $t_q$ query time. Then, each connection query can be answered in $O(\alpha(n))$ time, where $\alpha(n)$ is the inverse Ackermann function, and each insertion can be handled in $T=O(t_d\log^2 n+(t_i+t_q)\log n)$ time. Using the previous result with $t_d=O(\log^4 n)$, $t_i=O(\log^2 n)$, and $t_q=O(\log^2 n)$~\cite{ref:KaplanDy20,ref:LiuNe22}, $T=O(\log^6 n)$ expected insertion time is achieved in \cite{ref:KaplanIn24}. Using our new dynamic nearest neighbor data structure with $t_d=O(\log^3 n\log\log n)$, $t_i=O(\log^2 n)$, and $t_q=O(\log^3 n/\log\log n)$, we can have improved $T=O(\log^5n\log\log n)$ expected insertion time. 

If both insertions and deletions are allowed, Baumann, Kaplan, Klost, Knorr, Mulzer, Roditty, and Seiferth~\cite{ref:BaumannDy24} obtained the following result, assuming that the radius of every disk is in $[1,\Psi]$. Suppose we define $t_i$, $t_d$, and $t_q$ in the same way as above, then 
%have a dynamic nearest neighbor data structure for 2D points under additively-weighted Euclidean metric, with $t_i$ insertion time, $t_d$ deletion time, and $t_q$ query time. Then, 
each connectivity query can be answered in $O(\log n/\log\log n)$ time, and each update can be handled in $T=O(\Psi\cdot (t_i+t_d+t_q))$ time. As above, using our new dynamic nearest neighbor result, we can achieve $T=O(\Psi\log^3n\log\log n)$ expected time, which improves the previous $O(\Psi\log^4 n)$ expected time~\cite{ref:BaumannDy24} by roughly a logarithmic factor.

\paragraph{Spanners for disk graphs.}
Given a set $S$ of $n$ disks in the plane and a parameter $\epsilon > 0$, a $(1+\epsilon)$-spanner is a subgraph $G_{\epsilon}(S)$ of the disk graph $G(S)$ such that for any two disks in $S$, their shortest path distance in $G_{\epsilon}(S)$ is at most $(1+\epsilon)$ times their shortest path distance in $G(S)$. An algorithm for computing such a spanner is given by in \cite{ref:KaplanDy20}, with a runtime of
$O(n \log n + t_i \cdot n \log n + t_d \cdot n / \epsilon^2)$,
where $t_i$ and $t_d$ are the insertion and deletion times, respectively, for a dynamic nearest neighbor data structure under the additively-weighted Euclidean metric. Using the previous data structure~\cite{ref:KaplanDy20,ref:LiuNe22} with $t_i = O(\log^2 n)$ and $t_d = O(\log^4 n)$, the algorithm runs in $O(n / \epsilon^2 \cdot \log^4 n)$ expected time. By replacing the data structure with our new one, which has $t_i = O(\log^2 n)$ and $t_d = O(\log^3 n \log\log n)$, we obtain an improved algorithm with expected runtime $O(n / \epsilon^2 \cdot \log^3 n \log\log n)$.

% \paragraph{Maintaining approximated maximum independent set for dynamic disk graphs.}
% Bhore and Chan~\cite{ref:BhoreDy25} gave an algorithm to maintain an $O(1)$-approximation maximum independent set in the disk graphs of a dynamic set of disks and each update takes $O(\log^6 n)$ amortized (deterministic) time. The algorithm uses Chan's lower envelope data structure 

\section{Shortest paths in weighted disk graphs}
\label{sec:spdisk}

Let $P$ be a set of $n$ points in the plane such that each point $p \in P$ is associated with a weight $r_p$ representing the radius of a disk $D_p$ centered at $p$. Let $S$ be the set of all these disks. 
Given a parameter $r \geq 0$, let $G_r(S)$ denote the \emph{proximity graph} whose vertex set is $P$, and where two vertices $p$ and $p'$ are connected by an edge if $\lVert p - p' \rVert - r_p - r_{p'} \leq r$, where $\lVert p - p' \rVert$ denotes the Euclidean distance between $p$ and $p'$.

Equivalently, if we define the distance between two disks $D_p$ and $D_{p'}$ as $d(D_p, D_{p'}) = \max\{0, \lVert p - p' \rVert - r_p - r_{p'}\}$, then $p$ and $p'$ have an edge in $G_r(S)$ if and only if $d(D_p, D_{p'}) \leq r$. When $r = 0$, $G_r(S)$ becomes the \emph{disk graph} of $P$, i.e., two vertices have an edge if their associated disks intersect. Hence, disk graphs are a special case of proximity graphs.

There are two ways to define edge weights in $G_r(S)$: the \emph{disk-distance case}, where the weight of an edge $(p, p')$ is $d(D_p, D_{p'})$, and the \emph{center-distance case}, where the weight is $\lVert p - p' \rVert$. For the special case $r = 0$, only the center-distance version is meaningful.
For both cases, we consider the \emph{single-source shortest path} (SSSP) problem: given $P$, $r$, and a source point $s \in P$, compute a shortest path tree from $s$ in $G_r(S)$.

In the disk-distance case, Kaplan, Katz, Saban, and Sharir~\cite{ref:KaplanTh23} gave an algorithm with expected runtime $O(n \log^4 n)$, dominated by a dynamic 2D bichromatic closest pair (BCP) data structure with $O(n)$ insertions and deletions. By replacing their BCP data structure with our new one in a black-box manner, we immediately improve the expected runtime to $O(n \log^3 n \log\log n)$.

%A dynamic BCP data structure is also used in \cite{ref:KaplanTh23}. Replacing it with our new BCP data structure gives an algorithm of $O(n\log^5\log\log n)$ time, which is worse than the algorithm of \cite{ref:AnSi25}. 

We now focus on the center-distance case. The authors of \cite{ref:KaplanTh23} also gave an algorithm with expected runtime $O(n \log^6 n)$ for this case. Very recently, An, Oh, and Xue~\cite{ref:AnSi25} in SoCG 2025 developed a new algorithm with runtime $O(n \log^4 n)$ for the special case of disk graphs (i.e., when $r = 0$).
We propose a new algorithm with expected runtime $O(n \log^3 n \log\log n)$, using our dynamic data structure for additively-weighted (Euclidean) nearest neighbor queries. Notably, our algorithm is not derived by merely replacing any component of the algorithm in \cite{ref:AnSi25} in a black-box manner; rather, we introduce an entirely new algorithmic framework. This framework is conceptually simple and may be of independent interest for addressing other variants of the shortest path problem in disk graphs or more broadly in geometric intersection graphs. In addition, our algorithm works for any $r\geq 0$ while the algorithm in \cite{ref:AnSi25} is particularly for the special case $r=0$.  

We describe the algorithm and argue its correctness in Section~\ref{sec:description}, and discuss its implementation and time complexity in Section~\ref{sec:imple}.

\subsection{Algorithm description and correctness}
\label{sec:description}

For each point $p\in P$, let $d_G(p)$ denote the shortest path lengths from $s$ to $p$ in $G_r(S)$. We will only discuss how to compute $d_G(p)$ for all points $p\in P$. The algorithm can be easily modified to also compute the predecessor information of shortest paths. 

For two points $p,q\in P$, we say that they are {\em adjacent} to each other in $G_r(S)$ if they have an edge in the graph.

The algorithm maintains a tentative distance value $dis(p)$ for each point $p$, such that $dis(p) = d_G(p)$ when the algorithm terminates. Initially, we set $dis(s)=0$, and $dis(p)=\infty$ for all other points $p$.
We maintain two subsets $A$ and $B$ of $P$, where $A$ consists of the points whose shortest path distances have been correctly computed (i.e., $dis(p) = d_G(p)$ for all $p \in A$), and $B = P \setminus A$. Furthermore, we partition $A$ into two subsets $A_1$ and $A_2$ such that no point in $A_1$ is adjacent to any point in $B$ in $G_r(S)$.
%implying that $A_1$ is useless in computing shortest paths for points of $B$. 
The algorithm maintains an invariant that is stated in the following observation for reference purpose. 

\begin{observation}\label{obser:invariant}
For every point $p\in A$, it holds that $dis(p)=d_G(p)$. Furthermore, if $p\in A_1$, then for any point $q$ adjacent to $p$ in $G_r(S)$, we must have $q \in A$.
\end{observation}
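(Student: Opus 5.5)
This observation is an invariant of the shortest-path algorithm, so I will prove it by induction over the algorithm's steps. The invariant must hold after initialization and be preserved by every update to $A_1$, $A_2$, $B$ and the values $dis(\cdot)$. Only the initial state is fixed so far: $A=\emptyset$ or $A=\{s\}$, with $dis(s)=0=d_G(s)$. So the plan must anticipate a Dijkstra-like loop that either moves a point from $B$ into $A_2$ or moves a point from $A_2$ into $A_1$.

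\emph{Base case.} If $A=\emptyset$, both statements are vacuous. If the algorithm starts with $A_2=\{s\}$ and $A_1=\emptyset$, the first statement is $dis(s)=0=d_G(s)$ and the second is vacuous.

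\emph{Moving a point from $B$ to $A$.} Suppose the algorithm picks $q\in B$ minimizing a candidate value $\min_{p\in A,\,p\sim q}(dis(p)+\lVert p-q\rVert)$; this is the quantity the additively-weighted nearest-neighbor structure computes, and I will assume the algorithm uses it. Then I would prove $dis(q)=d_G(q)$ by the standard Dijkstra exchange argument:
\begin{enumerate}
\item Take a shortest path from $s$ to $q$, and let $(u,v)$ be its first edge with $u\in A$ and $v\in B$.
\item By the second part of the invariant, $u\notin A_1$, so $u\in A_2$. Hence the edge $(u,v)$ is among the candidates the algorithm examined.
\item Edge weights are nonnegative, so $dis(u)+\lVert u-v\rVert\le d_G(q)$.
\item By minimality of the choice, $dis(q)\le d_G(q)$. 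Since $dis(q)$ is always the length of an actual path, $dis(q)\ge d_G(q)$, giving equality.
\end{enumerate}
Adding $q$ to $A_2$ creates no new $A_1$-constraint. Removing $q$ from $B$ only shrinks the set of points that $A_1$ must avoid, so the second statement is preserved.

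\emph{Moving a point from $A_2$ to $A_1$.} The algorithm does this only when it has certified that $p$ has no neighbor left in $B$. In practice, a nearest-neighbor or proximity query over $B$ (or a disk-range emptiness test) returns no point within distance $r_p+r$ in the additively-weighted sense. Points never re-enter $B$, so this property persists.

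\emph{Main obstacle.} The delicate step is matching the induction to the algorithm's actual selection rule. If the algorithm processes points in batches rather than one minimum at a time, I need a stronger exchange argument: every point settled in a round must have its true distance realized through an $A_2$ neighbor. I would first try to show the batch satisfies a Dijkstra-like ordering, namely that each point's candidate value is at most the distance of any point remaining in $B$. Then I would apply the argument above to each point in turn.
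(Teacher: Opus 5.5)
Your induction skeleton and your handling of the second part are fine. In the paper, $a$ is moved to $A_1$ in the same iteration in which all of $B_a$ (every neighbour of $a$ still in $B$) is moved to $A_2$, and nothing ever returns to $B$.

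The gap is in the first part: your exchange argument is for an algorithm the paper does not run. The paper never picks a single point of $B$ of minimum tentative value. It picks $a\in A_2$ minimizing $dis(a)+r_a$ and then settles \emph{all} of $B_a$ at once. Each $b\in B_a$ gets the value $\min\{dis(p)+\lVert p-b\rVert : p\in A,\ p\sim b\}$, writing $p\sim b$ for adjacency in $G_r(S)$. So your step~4 (``by minimality of the choice'') has nothing to appeal to. The fallback in your ``Main obstacle'' paragraph is also false here, namely showing that every settled point's value is at most the distance of every point still in $B$. Suppose $s$ is adjacent to a far point $b_1$ with a huge disk and to a near point $b_2$, and $b_2$ is adjacent to some $b_3$ not adjacent to $s$. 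Then the first iteration settles $b_1$ at $\lVert s-b_1\rVert$, while $b_3$, with far smaller $d_G(b_3)$, stays in $B$. Points are not settled in distance order, so no Dijkstra-ordering argument can close the induction.

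The missing idea is a geometric exchange argument. It uses the specific key $dis(a)+r_a$ and the fact that center-distance weights are Euclidean lengths.
\begin{itemize}
\item Fix $b\in B_a$, a shortest $s$--$b$ path, its first vertex $p\notin A$, and the predecessor $q\in A$ of $p$. By the triangle inequality, the subpath from $q$ to $b$ has length at least $\lVert q-b\rVert$, so $d_G(b)\ge d_G(q)+\lVert q-b\rVert$.
\item If $q\sim b$ (in particular if $p=b$), then $q$ is a candidate in the minimum, so $dis(b)\le d_G(b)$. Equality follows since $dis(b)$ is a path length.
\item Otherwise, $q\notin A_1$ by the inductive invariant, so $q\in A_2$ and $dis(a)+r_a\le dis(q)+r_q$. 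Also $q\not\sim b$ and $a\sim b$ give $\lVert q-b\rVert-r_q>r+r_b\ge\lVert a-b\rVert-r_a$.
\item Adding these yields $d_G(a)+\lVert a-b\rVert<d_G(q)+\lVert q-b\rVert\le d_G(b)$. This contradicts the edge $(a,b)$, so this case cannot occur.
\end{itemize}
This is the content of Lemma~\ref{lem:invariant}, which together with step~(3) is how the paper maintains the observation.
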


Initially, we set $A_2=A=\{s\}$, $A_1=\emptyset$, and $B=P\setminus\{s\}$. The algorithm proceeds iteratively (see Algorithm~\ref{algo:spdisk} for the pseudocode). 
In each iteration, we select the point $a\in A_2$ that minimizes $dis(a)+r_a$. Then, we perform the following three {\em main steps}: (1) Find the set $B_{a}$ of points of $B$ adjacent to  $a$ in $G_r(S)$; (2) for each point $b\in B_a$, among all points of $A$ adjacent to $b$ in $G_r(S)$, find the one $q_b$ minimizing $dis(q_b)+\lVert q_b-b\rVert$, and set $dis(b)=dis(q_b)+\lVert q_b-b\rVert$; (3) move all points of $B_a$ from $B$ to $A_2$ and move $a$ from $A_2$ to $A_1$. This completes one iteration of the algorithm. Lemma~\ref{lem:invariant} (see below) proves that for every $b\in B_a$, the computed $dis(b)$ equals $d_G(p)$. This, together with step (3), ensures the invariant in Observation~\ref{obser:invariant} is preserved, thereby establishing the correctness of the algorithm. 

The algorithm terminates once $A_2$ becomes empty, after which if $B\neq \emptyset$, then points of $B$ cannot be reached from $s$ in $G_r(S)$ by Observation~\ref{obser:invariant} and thus $d_G(p)=\infty$ for all $p\in B$.

\begin{algorithm}[h]
	\caption{SSSP algorithm}
	\label{algo:spdisk}
	\SetAlgoNoLine
	\KwIn{$P$, $r$, and a source point $s\in P$}
	\KwOut{$dis(p)$ for all $p\in P$} 
        \BlankLine
	%\tcc{Chan's deletion algorithm but using our new insertions.}
        $A_1\leftarrow \emptyset$, $A_2\leftarrow\{s\}$, $B\leftarrow P\setminus \{s\}$, $dis(s)\leftarrow 0$\;
        \While{$A_2\neq \emptyset$}
        {
          $a\leftarrow \argmin_{p\in A_2}(dis(p)+r_p)$\;
          $B_a\leftarrow\{p\in B: (a,p) \text{ is an edge in } G_r(S)\}$\;
        \For{\em each point $b\in B_a$}
        {
          $q_b\leftarrow\argmin_{p\in A,\ (p,b)\text{ is an edge in } G_r(S)}(dis(p)+\lVert p-b\rVert)$\;
          $dis(b)\leftarrow dis(q_b)+\lVert q_b-b\rVert$\;
        }
        move all points of $B_a$ from $B$ to $A_2$\;
        move $a$ from $A_2$ to $A_1$\;
        }
\end{algorithm}

\begin{lemma}\label{lem:invariant}
$d_G(b)=dis(q_b)+\lVert q_b-b\rVert$ holds for each point $b\in B_a$.
\end{lemma}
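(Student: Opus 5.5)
Since $q_b\in A$ is adjacent to $b$, Observation~\ref{obser:invariant} gives $dis(q_b)=d_G(q_b)$, so $dis(q_b)+\lVert q_b-b\rVert$ is the length of an actual path from $s$ to $b$. Hence $d_G(b)\le dis(q_b)+\lVert q_b-b\rVert$, and it remains to prove the reverse inequality. The plan is to show that some shortest $s$--$b$ path has a point of $A$ as the predecessor of $b$. For such a predecessor $p$ we have $d_G(b)=d_G(p)+\lVert p-b\rVert=dis(p)+\lVert p-b\rVert\ge dis(q_b)+\lVert q_b-b\rVert$ by the minimality of $q_b$.

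\textbf{Step 1.} Let $\pi$ be a shortest path from $s$ to $b$; it exists because $b$ is adjacent to $a\in A$, which is reachable. Since $s\in A$ and $b\in B$, let $p$ be the last vertex of $\pi$ in $A$ and $u$ its successor, so $u\in B$. By Observation~\ref{obser:invariant}, $p\notin A_1$, because $u$ is adjacent to $p$ but not in $A$. Hence $p\in A_2$. If $u=b$ we are done, so assume $u\neq b$.

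\textbf{Step 2 (the main obstacle).} We must show that $\pi$ can be replaced by a path that enters $b$ directly from a point of $A$. Here we use the selection rule: $a$ minimizes $dis(\cdot)+r_\cdot$ over $A_2$, so $dis(p)+r_p\ge dis(a)+r_a$. The triangle inequality alone gives
\[
d_G(b)\ \ge\ d_G(p)+\lVert p-b\rVert\ \ge\ d_G(p)+r_p+r_b+(\lVert p-b\rVert-r_p-r_b).
\]
The difficulty is that $p$ need not be adjacent to $b$. The natural comparison is therefore with $a$, since $a$ is adjacent to $b$: $\lVert a-b\rVert\le r+r_a+r_b$. I would argue by contradiction. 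Suppose $d_G(b)<dis(a)+\lVert a-b\rVert$. Then the portion of $\pi$ after $p$ (length at least $\lVert p-b\rVert$) must be short compared with the path through $a$. One would like $\lVert a-b\rVert-r_a\le\lVert p-b\rVert-r_p$ to conclude $dis(a)+\lVert a-b\rVert\le dis(p)+\lVert p-b\rVert\le d_G(b)$, but this needs geometry: comparing $\lVert a-b\rVert\le r_a+r_b+r$ against $\lVert p-b\rVert$ only works when $p$ is not adjacent to $b$, i.e. $\lVert p-b\rVert>r_p+r_b+r$. In that case
\[
dis(a)+\lVert a-b\rVert\le dis(a)+r_a+r_b+r\le dis(p)+r_p+r_b+r<dis(p)+\lVert p-b\rVert\le d_G(b),
\]
which contradicts the fact that the path via $a$ is a valid path. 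So either $p$ is adjacent to $b$, or the path through $a$ (with $a\in A$ adjacent to $b$) is already optimal. In the first case we still need the full distance to equal the direct hop, and I expect this to fail in general: a detour through $u$ could be shorter than the direct hop from $p$.

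\textbf{Step 3.} To handle the case where $p$ is adjacent to $b$ but $\pi$ detours through $u\in B$, I would instead use the predecessor $w$ of $b$ on $\pi$ and induct along $\pi$, hoping to show $w\in A$. If $w\in B$, apply the same comparison to $w$: any $A_2$ point reaching $w$ has $dis+r$ at least $dis(a)+r_a$. Concluding $d_G(b)\ge dis(a)+\lVert a-b\rVert$ or the existence of a better $A$-neighbor needs the triangle-inequality-type bound above, and I am genuinely unsure it closes. This is the step I expect to be delicate, and possibly where the claim fails as stated.
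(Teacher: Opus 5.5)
\textbf{Gap.} As written, the proposal does not prove the lemma. It stops at the case ``$p$ adjacent to $b$ but $\pi$ detours through $u\in B$'' and asserts this may fail because ``a detour through $u$ could be shorter than the direct hop from $p$.'' It then opens an unfinished induction (Step~3) that you concede may not close. That assertion is false, and refuting it is the only missing step.

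Edge weights in $G_r(S)$ are Euclidean distances between centers. So the portion of $\pi$ from $p$ to $b$ has length at least $\lVert p-b\rVert$ by the triangle inequality, a fact you already invoke earlier in Step~2. Hence, if $p$ is adjacent to $b$, then $d_G(b)\le d_G(p)+\lVert p-b\rVert\le d_G(p)+(\text{length of }\pi\text{ from }p\text{ to }b)=d_G(b)$. This gives $d_G(b)=dis(p)+\lVert p-b\rVert\ge dis(q_b)+\lVert q_b-b\rVert$, by the invariant and the minimality of $q_b$ over the neighbors of $b$ in $A$. In this graph a detour can never beat a direct edge.

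\textbf{Completing your argument.} With that observation your proof is complete and Step~3 can be discarded. Your chain $dis(a)+\lVert a-b\rVert\le dis(p)+r_p+r_b+r<dis(p)+\lVert p-b\rVert\le d_G(b)\le dis(a)+\lVert a-b\rVert$ shows the non-adjacent case is impossible outright. The adjacent case, which also covers $u=b$, is handled above.

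\textbf{Comparison with the paper.} The result is then essentially the paper's proof. The paper takes the first vertex $p$ of $\pi$ outside $A$ and its predecessor $q$, rather than the last vertex of $\pi$ in $A$; this makes no difference. It excludes $q\in A_1$ via Observation~\ref{obser:invariant}. It dismisses $q$ adjacent to $b$ by remarking that a shortest path would then use the edge $(q,b)$ directly, which is the same triangle-inequality shortcut you were missing. Otherwise it combines $dis(a)+r_a\le dis(q)+r_q$ with $\lVert a-b\rVert-r_a<\lVert q-b\rVert-r_q$, essentially as in your Step~2, to contradict the assumed strict inequality $d_G(b)<dis(q_b)+\lVert q_b-b\rVert$.
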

\begin{proof}
Consider a point $b\in B_a$. Our goal is to prove $d_G(b)=dis(q_b)+\lVert q_b-b\rVert$. 

Assume to the contrary that $d_G(b)\neq dis(q_b)+\lVert q_b-b\rVert$. By the definition of $q_b$, $q_b$ is adjacent to $b$ in $G_r(S)$. Hence, $d_G(b)\leq dis(q_b)+\lVert q_b-b\rVert$. Since $d_G(b)\neq dis(q_b)+\lVert q_b-b\rVert$, we obtain 
\begin{equation}\label{equ:contradict}
    d_G(b)< dis(q_b)+\lVert q_b-b\rVert.
\end{equation}

Let $\pi(s,b)$ be a shortest path from $s$ to $b$ in $G_r(S)$. Let $p$ be the first point of $\pi(s,b)$ not in $A$. Since $b\not\in A$, such a point $p$ must exist. Let $q$ be the predecessor of $p$ in $\pi(s,b)$; see Figure~\ref{fig:path}. Hence, $q$ has an edge with $p$ in $G_r(S)$ and $q\in A$. 
Depending on whether $p$ is $b$, there are two cases. 

\paragraph{The case $\boldsymbol{p=b}$.}
If $p=b$, then $d_G(b)=d_G(q)+\lVert q-b\rVert$. By the definition of $q_b$, we have $dis(q_b)+\lVert q_b-b\rVert\leq dis(q)+\lVert q-b\rVert$. Since $q\in A$, according to our algorithm invariant in Observation~\ref{obser:invariant}, $dis(q)=d_G(q)$. We can now derive 
\begin{equation*}
\begin{split}
    d_G(b)& =d_G(q)+\lVert q-b\rVert 
           =dis(q)+\lVert q-b\rVert 
           \geq dis(q_b)+\lVert q_b-b\rVert,
\end{split}
\end{equation*}
which contradicts with \eqref{equ:contradict}.

\paragraph{The case $\boldsymbol{p\neq b}$.}
If $p\neq b$, depending on whether $q$ is in $A_1$ or $A_2$ (recall that $q\in A$), there are further two subcases. 
%has been processed, i.e., whether $p'$ has been picked as $a$, the point of $A_2$ with minimum $dis(a)$, there are two cases. 

If $q\in A_1$, then according to our algorithm invariant in Observation~\ref{obser:invariant}, all vertices of $G_r(S)$ adjacent to $q$ must have been moved to $A$. As $q$ is adjacent to $p$ in $G_r(S)$, $p$ must be in $A$, which contradicts with the fact that $p\in B$. Therefore, $q\in A_1$ is not possible.

\begin{figure}[t]
\begin{minipage}[t]{\linewidth}
\begin{center}
\includegraphics[totalheight=0.8in]{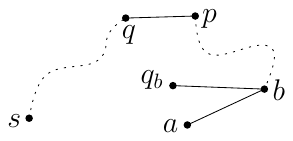}
\caption{The three solid segments represent edges in $G_r(S)$. The two dotted curves represent two subpaths of $\pi(s,b)$.}
\label{fig:path}
\end{center}
\end{minipage}
%\vspace*{-0.15in}
\end{figure}

In the following, we prove the subcase $q\in A_2$. By the definition of $a$, since $q\in A_2$, we have 
\begin{equation}\label{equ:invariant20}
    dis(a)+r_a\leq dis(q)+r_q.
\end{equation}

Recall that $a$ is adjacent to $b$ in $G_r(S)$ and thus $\lVert a-b\rVert-r_a-r_b\leq r$; see Figure~\ref{fig:path}. Since $p\neq b$ and $q$ is the predecessor of $p$ in $\pi(s,b)$, $q$ cannot be adjacent to $b$ in $G_r(S)$ since otherwise $q$ would connect to $b$ directly in $\pi(s,b)$. Hence, $\lVert q-b\rVert-r_q-r_b> r$. We thus obtain $\lVert a-b\rVert-r_a-r_b<\lVert q-b\rVert-r_q-r_b$, which is equivalent to
\begin{equation}\label{equ:invariant30}
     \lVert a-b\rVert-r_a < \lVert q-b\rVert-r_q.
\end{equation}

Combining the above two inequalities \eqref{equ:invariant20} and \eqref{equ:invariant30} leads to 
\begin{equation*}
    dis(a)+\lVert a-b\rVert<dis(q)+\lVert q-b\rVert.
\end{equation*}
By the definition of $q_b$, it holds that $dis(q_b)+\lVert q_b-b\rVert\leq dis(a)+\lVert a-b\rVert$. Hence, $dis(q_b)+\lVert q_b-b\rVert < dis(q)+\lVert q-b\rVert$. Since both $q_b$ and $q$ are in $A$, by Observation~\ref{obser:invariant}, $dis(q_b)=d_G(q_b)$ and $dis(q)=d_G(q)$. Therefore, 
\begin{equation}\label{equ:invariant40}
    d_G(q_b)+\lVert q_b-b\rVert < d_G(q)+\lVert q-b\rVert.
\end{equation}

On the other hand, the length of $\pi(s,b)$, which is $d_G(b)$, is equal to the length of the subpath of $\pi(s,b)$ from $s$ to $q$, which is $d_G(q)$, plus the length of the subpath of $\pi(s,b)$ between $q$ and $b$. By triangle inequality, the length of the latter subpath is larger than or equal to $\lVert q-b\rVert$. Hence,
\begin{equation}\label{equ:invariant50}
    d_G(b)\geq d_G(q)+\lVert q-b\rVert.
\end{equation}

Combining \eqref{equ:invariant40} and \eqref{equ:invariant50} yields $d_G(b)>d_G(q_b)+\lVert q_b-b\rVert$. But this contradicts with \eqref{equ:contradict}. 

The lemma thus follows. 
\end{proof}

\subsection{Algorithm implementation and time analysis}
\label{sec:imple}

We now elaborate on the efficient implementation of the algorithm, particularly the three main steps in each iteration. We use a min-heap to maintain the set $A_2$ with $dis(p)+r_p$ as keys for the points $p\in A_2$, so that the point $a$ can be found by a delete-min operation in $O(\log n)$ time. Each insertion into $A_2$ can be done in $O(\log n)$ time. 

In the following, we focus on the first two main steps. 
We will show that the first main step can be implemented in $O((|B_a|+1)\log^3 n\log\log n)$ expected time using our dynamic additively-weighted (Euclidean) nearest neighbor queries, and the second main step can be implemented in $O(|B_a|\log^3 n)$ time. 
As each point enters $B_a$ at most once throughout the algorithm, the total expected runtime over all iterations is $O(n\log^3 n\log\log n)$.

\subsubsection{The first main step}
The goal of the first main step is to compute $B_a$. By definition, a point $p\in B$ is adjacent to $a$ in $G_r(S)$ if $\lVert p-a\rVert-r_p-r_a\leq r$, which is equivalent to  $\lVert p-a\rVert-r_p\leq r_a+r$. 

We reduce the problem to a sequence of dynamic nearest neighbor queries under additively-weighted Euclidean metric. For each point $p\in B$, we define an additive weight $w(p)=-r_p$, so that the {\em weighted distance} between a point $q\in \bbR^2$ and $p$ is $\lVert p-q\rVert+w(p)$. 

To compute $B_a$, we repeatedly find the additively-weighted nearest neighbor $p\in B$ of $a$, add $p$ to $B_a$, and delete $p$ from $B$, until $\lVert p-a\rVert+w_p> r_a+r$. We maintain the set $B$ using our dynamic additively-weighted nearest neighbor query data structure described in Section~\ref{sec:app}, supporting deletions only throughout the entire algorithm. Each nearest neighbor query takes $O(\log^3 n/\log\log n)$ time and each deletion takes $O(\log^3 n\log\log n)$ amortized expected time. In this way, computing $B_a$ takes $O((|B_a|+1)\log^3n\log\log n)$ amortized expected time.  

\subsubsection{The second main step}
%let $E_b$ denote the set of points of $A$ that have edges with $b$ in $G_r(S)$. By definition, a point $p\in A$ is in $E_b$ if $\lVert pb\lVert-r_p-r_b\leq r$. Our goal is to find the point $q_b\in E_b$ such that $dis(q_b)+\lVert bq_b\rVert$ is minimized.
The goal of the second main step is to compute $q_b$ from $A$ for all points $b\in B_a$.

We maintain a balanced binary search tree $T_A$, the leaves (from left to right) store points $p\in A$ in increasing order of the values $dis(p)+r_p$. For each internal node $v\in T_A$, let $A_v$ denote the subset of points stored at the leaves in the subtree rooted at $v$. We store $Vor_1(v)$, the additively-weighted (Euclidean) Voronoi diagram for the points of $A_v$ with the weight of each point $p\in A$ as $-r_p$. We also store $Vor_2(v)$, the additively-weighted (Euclidean) Voronoi diagram for $A_v$  with the weight of each point $p\in A$ as $dis(p)$.

For each point $b\in B$, to find $q_b$, our algorithm has the following two procedures. 

\begin{enumerate}
\item 
In the first procedure, we find the leftmost leaf $u_b$ of $T_A$ such that $u_b$ is adjacent to $b$ in $G_r(S)$, i.e., $\lVert u_b-b\lVert-r_{u_b}-r_b\leq r$ (for simplicity, we also use $u_b$ to refer to the point stored at $u_b$). Hence, $q_b$ cannot be to the left of $u_b$ and we only need to consider leaves to the right of $u_b$. We can find $u_b$ in $O(\log^2 n)$ time by using the Voronoi diagram $Vor_1(v)$, as follows. 

Starting from the root of $T_A$, for each node $u$, we do the following. Let $v$ be the left child of $u$. Using $Vor_1(v)$, we find the additively-weighted nearest neighbor $q\in A_v$ of $b$ and check whether $\lVert q-b\lVert -r_q\leq r_b+r$. If yes, we search $v$ recursively; otherwise, we search the right subtree of $u$ recursively. The search will eventually reach a leaf and we return it as $u_b$. Since computing a nearest neighbor using Voronoi diagrams takes $O(\log n)$ time, the total time for computing $u_b$ is $O(\log^2 n)$. 

\item 
In the second procedure, using the standard approach, we identify a set $V_b$ of  $O(\log n)$ nodes of $T_A$ whose leaf ranges form a partition of the leaves to the right of (and including) $u_b$. That is, $\bigcup_{v \in V_b} A_v$ is exactly the set of candidate points for $q_b$.

For each node $v\in V_b$, using $Vor_2(v)$, we find the additively-weighted nearest neighbor $p_v\in A_v$ of $b$. Among all such $p_v$, let $q^*$ be one minimizing $dis(p_v)+\lVert p_v-b\lVert$. The next lemma proves that $q^*$ is $q_b$. Therefore, we return $q^*$. 
The runtime of the second procedure is clearly $O(\log^2 n)$. 
\end{enumerate}

\begin{lemma}
The point $q^*$ is $q_b$, that is, $q^*=\argmin_{p\in A,\ (p,b)\text{ is an edge in } G_r(S)}(dis(p)+\lVert p-b\rVert)$. 
\end{lemma}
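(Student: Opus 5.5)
The plan is to show two things. First, every point of $A$ adjacent to $b$ lies among the candidates $\bigcup_{v\in V_b}A_v$. Second, no non-adjacent candidate can be the minimizer. Together these show that minimizing $dis(p)+\lVert p-b\rVert$ over the candidate set gives the same answer as minimizing it over the adjacent points of $A$.

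For the first part I would use the definition of $u_b$ as the leftmost leaf adjacent to $b$. This definition is realized by the search with $Vor_1(v)$: the additively-weighted nearest neighbor of $b$ in $A_v$ under weights $-r_p$ minimizes $\lVert p-b\rVert-r_p$. So $v$ contains a point adjacent to $b$ if and only if that minimum is at most $r_b+r$, and hence the descent finds the leftmost adjacent leaf. Every point adjacent to $b$ is therefore at $u_b$ or to its right, which means it lies in $\bigcup_{v\in V_b}A_v$. Moreover, $Vor_2(v)$ returns the $p_v\in A_v$ minimizing $\lVert p-b\rVert+dis(p)$. Consequently $q^*$ minimizes $dis(p)+\lVert p-b\rVert$ over the whole candidate set, and this set contains every adjacent point.

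The main step is ruling out non-adjacent candidates, and I would use the ordering of the leaves by $dis(p)+r_p$, in the same way as in the proof of Lemma~\ref{lem:invariant}. Take any candidate $p$ that is not adjacent to $b$. Since $p$ is at or to the right of $u_b$, we have $dis(u_b)+r_{u_b}\le dis(p)+r_p$. Since $u_b$ is adjacent to $b$ and $p$ is not, we have
\[
\lVert u_b-b\rVert-r_{u_b}\le r+r_b<\lVert p-b\rVert-r_p .
\]
Adding the two inequalities gives $dis(u_b)+\lVert u_b-b\rVert<dis(p)+\lVert p-b\rVert$. So every non-adjacent candidate is strictly worse than the adjacent candidate $u_b$, and therefore the minimizer $q^*$ must be adjacent to $b$.

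Combining the two parts, $q^*$ is adjacent to $b$ and attains the minimum over a superset of the adjacent points of $A$. Hence $q^*$ attains the minimum over the adjacent points, which is exactly the definition of $q_b$; ties can be broken consistently. The only delicate point is the strict inequality in the main step, which guarantees that the minimizer cannot be a non-adjacent point even when values are tied.
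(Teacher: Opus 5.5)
Your proposal is correct and follows essentially the paper's proof. Both arguments first note that every point of $A$ adjacent to $b$ lies at or to the right of $u_b$, and then show that $q^*$ is adjacent to $b$ by combining the leaf ordering $dis(u_b)+r_{u_b}\le dis(q^*)+r_{q^*}$ with the adjacency of $u_b$. The only difference is phrasing: you argue contrapositively that every non-adjacent candidate is strictly worse than $u_b$, while the paper subtracts the minimality inequality directly to obtain $\lVert q^*-b\rVert-r_{q^*}\le\lVert u_b-b\rVert-r_{u_b}\le r_b+r$.
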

\begin{proof}
%For each node $v\in V_b$, let $p_v$ be the additively-weighted nearest neighbor of $b$ in $A_v$. Let $q^*$ be the $p_v$ that minimizes $dis(p_v)+\lVert p_v-b\lVert$ among all $v\in V_b$. Our goal is to prove that $q^*$ is $q_b$.
By the definition of the leaf $u_b$, $q_b$ must be in $A_v$ for some $v\in V_b$. By the definition of $q^*$, to prove $q^*$ is $q_b$, it suffices to show that $q^*$ is adjacent to $b$ in $G_r(S)$, i.e., $\lVert q^*-b\lVert-r_{q^*}-r_b\leq r$, or equivalently $\lVert q^*-b\lVert-r_{q^*}\leq r_b+ r$.

First of all, since leaves $p$ of $T_A$ are sorted in increasing order of $dis(p)+r_p$ from left to right, and $q^*$ is not left of $u_b$, we have 
\begin{equation*}
    dis(u_b)+r_{u_b}\leq dis(q^*)+ r_{q^*}. 
\end{equation*}

By the definition of $q^*$, the following holds:
\begin{equation*}
    dis(u_b)+\lVert u_b-b\rVert \geq dis(q^*)+ \lVert q^*-b\rVert. 
\end{equation*}

Subtracting the two inequalities leads to 
\begin{equation}\label{equ:qb}
   \lVert u_b-b\rVert-r_{u_b}\geq \lVert q^*-b\rVert-r_{q^*}.
\end{equation}

Recall that $\lVert u_b-b\lVert-r_{u_b}-r_b\leq r$, or equivalently $\lVert u_b-b\lVert-r_{u_b}\leq r_b+ r$. With the above inequality~\eqref{equ:qb}, we finally obtain $\lVert q^*-b\lVert-r_{q^*}\leq r_b+ r$. The lemma thus follows. 
\end{proof}

The above discussion shows that each $q_b$ can be computed in $O(\log^2 n)$ time using the tree $T_A$. Since we also move each $b\in B_a$ to $A_2$, and thus to $A$, we must insert $b$ into $T_A$. To support such insertions, we replace the static Voronoi diagrams $Vor_1(v)$ and $Vor_2(v)$ at each node $v \in T_A$ with insertion-only Voronoi diagrams. Using the logarithmic method of Bentley and Saxe~\cite{ref:BentleyDe79}, we can implement insertion-only Voronoi diagrams with $O(\log^2 n)$ amortized insertion time and $O(\log^2 n)$ worst-case query time.

With this change, computing each $q_b$ as described earlier takes $O(\log^3 n)$ time. After setting $dis(b) = dis(q_b) + \|q_b - b\|$, we insert $b$ as a new leaf in $T_A$, using $dis(b) + r_b$ as its key. For each ancestor $v$ of the new leaf, we insert $b$ into both $Vor_1(v)$ and $Vor_2(v)$, requiring $O(\log^2 n)$ amortized time. Therefore, the total amortized cost of inserting $b$ into $T_A$ is $O(\log^3 n)$.\footnote{Note that an insertion may trigger a rebalancing operation at some node $v \in T_A$.
Since a node $v$ is rebalanced only after $\Theta(|A_v|)$ new points have been inserted
into $A_v$ since its last rebalancing, we can charge the rebalancing cost to these
new insertions. Therefore, the amortized insertion time is still bounded by $O(\log^3 n)$.}

To summarize, the second main step of the algorithm can be implemented in $O(|B_a| \log^3 n)$ amortized time.

We conclude with the following result:

\begin{theorem}
Given a set $S$ of disks in the plane, a parameter $r \geq 0$, and a source disk, the shortest paths from the source to all other vertices in the proximity graph $G_r(S)$ can be computed in $O(n \log^3 n \log \log n)$ expected time.
\end{theorem}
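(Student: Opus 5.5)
The proof assembles the pieces already established in Section~\ref{sec:spdisk}: correctness of Algorithm~\ref{algo:spdisk}, then the cost of each of its three main steps. The input disks $S$, the parameter $r$ and the source $s$ are given directly. For every $p\in P$ we form the point $p$ with additive weight $w(p)=-r_p$. The only randomized component is the additively-weighted nearest neighbor structure from Section~\ref{sec:app} (obtained from Theorem~\ref{theo:surface}). Because of it, the bound is stated in expectation.

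\textbf{Correctness.} We show by induction on the iterations that the invariant of Observation~\ref{obser:invariant} holds. Initially $A=\{s\}$ with $dis(s)=0=d_G(s)$, and $A_1=\emptyset$, so the invariant holds trivially. Now consider one iteration. By Lemma~\ref{lem:invariant}, every $b\in B_a$ receives $dis(b)=d_G(b)$, so moving $B_a$ into $A_2$ preserves the first part of the invariant. Once $B_a$ has been moved, $a$ has no remaining neighbor in $B$. Hence moving $a$ to $A_1$ preserves the second part. The lemma following Lemma~\ref{lem:invariant} shows that the tree $T_A$ correctly returns $q_b$, so the implementation computes exactly what the algorithm prescribes. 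When $A_2=\emptyset$, we have $A=A_1$. By the invariant, no vertex of $B$ is adjacent to $A$, and since $s\in A$, every vertex of $B$ is unreachable and correctly keeps $dis=\infty$.

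\textbf{Running time.} Each point enters $A_2$ at most once, so there are at most $n$ iterations and at most $n$ heap operations, for $O(n\log n)$ total. In step (1), each iteration performs $|B_a|+1$ nearest-neighbor queries and $|B_a|$ deletions on the deletion-only structure for $B$. This costs $O((|B_a|+1)\log^3 n\log\log n)$ amortized expected time. The sets $B_a$ are pairwise disjoint across iterations, so $\sum_a |B_a|\le n$. Together with at most $n$ iterations, step (1) contributes $O(n\log^3 n\log\log n)$ expected time in total.

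Steps (2) and (3) cost $O(\log^3 n)$ amortized per point of $B_a$: a query on $T_A$ plus an insertion into the insertion-only Voronoi diagrams along a root-to-leaf path, with rebalancing charged as in the footnote. Summing over all points gives $O(n\log^3 n)$. Building the initial deletion-only structure on $B$ by $n$ insertions costs $O(n\log^2 n)$ expected time.

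\textbf{Main subtlety.} The step needing care is that amortized expected bounds of the dynamic structure combine correctly over the whole sequence. Since the structure is used on a single operation sequence of length $O(n)$, the total expected cost is bounded by the number of operations times the amortized expected bounds. Adding the terms above, the total is $O(n\log^3 n\log\log n)$ expected time.
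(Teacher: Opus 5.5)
Your outline matches the paper's proof step for step: the invariant of Observation~\ref{obser:invariant} maintained via Lemma~\ref{lem:invariant}, disjointness of the sets $B_a$, and $O(\log^3 n)$ amortized per point for the tree $T_A$. Those parts are fine. The gap is the black box you and the paper both use for step (1). The bound $O((|B_a|+1)\log^3 n\log\log n)$ comes from Theorem~\ref{theo:surface}, i.e., $O(\log^3 n\log\log n)$ amortized expected deletions and $O(\log^3 n/\log\log n)$ queries for $\calD^*(\cdot)$. The paper's own appendix withdraws exactly that theorem.

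The failure is concrete. When \texttt{Chan-Delete}$(H^+,h)$ re-inserts a plane $h'$ into $H^-$, $h'$ still sits in conflict lists of other cells of $\calD(H^+)$. If $h'$ is later deleted, Algorithm~\ref{algo:delete} takes only the $H^-$ branch, so the counters $\kappa_{\triangle}$ of those cells never record the deletion. The KMRSS correctness argument needs every deletion from a conflict list to be counted: a cell whose list still contains the answer plane cannot have had $n/b^{i+1}$ of its planes deleted. Without this, a query can return a function that is not the lowest.

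For your algorithm this is not only a running-time issue. The loop that builds $B_a$ may stop while a neighbor of $a$ is still in $B$. Moving $a$ to $A_1$ then violates the very invariant your correctness proof relies on.

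The only known repair is to also call \texttt{Chan-Delete}$(H^+,h)$ for such planes. It restores correctness but gives $O(\log^4 n)$ amortized deletions, so step (1) costs $O(n\log^4 n)$ expected time. That is the same as plugging in the earlier structure of \cite{ref:KaplanDy20,ref:LiuNe22}. So your argument, like the paper's, actually supports correctness of Algorithm~\ref{algo:spdisk}, steps (2)--(3) in $O(n\log^3 n)$, and an overall $O(n\log^4 n)$ expected bound. It does not support the stated $O(n\log^3 n\log\log n)$.

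What you flag as the main subtlety (summing amortized expected costs over one operation sequence) is routine. To close the real gap you would need an independent structure for step (1) that reports and deletes all $p\in B$ with $\lVert p-a\rVert-r_p\le r_a+r$. It would have to cost $O(\log^3 n\log\log n)$ amortized expected time per reported point and per query, and neither your proposal nor the paper supplies one.
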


Setting $r=0$ obtains the following. 
\begin{corollary}
Given in the plane a set $S$ of disks and a source disk, the shortest paths from the source to all other vertices in the disk graph of $S$ can be computed in $O(n\log^3 n\log\log n)$ expected time.     
\end{corollary}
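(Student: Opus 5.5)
The corollary follows by specializing the preceding theorem to $r=0$. By definition, $G_0(S)$ has an edge between $p$ and $p'$ exactly when $\lVert p-p'\rVert-r_p-r_{p'}\leq 0$, that is, when $D_p$ and $D_{p'}$ intersect. So $G_0(S)$ is the disk graph of $S$. In the center-distance case the edge weights $\lVert p-p'\rVert$ are the usual disk-graph weights, and the source disk is simply a source vertex $s\in P$.

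I would check that nothing in Section~\ref{sec:description} or Section~\ref{sec:imple} needed $r>0$. In Lemma~\ref{lem:invariant}, the case $p\neq b$ uses only the adjacency of $a$ and $b$, $\lVert a-b\rVert-r_a-r_b\leq r$, and the strict non-adjacency of $q$ and $b$, $\lVert q-b\rVert-r_q-r_b>r$. Both hold verbatim at $r=0$.

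The remaining parts are also uniform in $r\geq 0$:
\begin{itemize}
\item In the first main step, the stopping threshold $r_a+r$ becomes $r_a$.
\item In the second main step, the adjacency test for $u_b$ and the inequality chain in the lemma showing $q^*=q_b$ are unchanged.
\end{itemize}
So the $O(n\log^3 n\log\log n)$ expected bound carries over directly. The only thing to watch is the boundary case where disks are tangent: ``intersect'' must mean closed disks, to match the non-strict inequality $\leq 0$.
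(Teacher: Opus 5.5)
Your proposal is correct and is exactly the paper's argument: the corollary is obtained by setting $r=0$ in the preceding theorem. Your checks that Lemma~\ref{lem:invariant} and the two main steps never use $r>0$ are sound, though redundant, since the theorem is already stated for all $r\geq 0$. One caveat applies equally to your proof and the paper's: the running time inherits the $O(\log^3 n\log\log n)$ amortized deletion bound of Theorem~\ref{theo:surface}, which the paper's appendix withdraws, so the reduction is fine but the $O(n\log^3 n\log\log n)$ bound is not currently established.
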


\section{Dynamic halfspace range reporting and $\boldsymbol{k}$-lowest-planes queries}
\label{sec:rangereport}

We consider a more general query in this section. We wish to maintain a dynamic set of points in $\bbR^3$ (under insertions and deletions) to answer the halfspace range reporting queries: Given a query halfspace, report all points inside it. 

In the dual setting, this corresponds to maintaining a dynamic set $H$ of planes in $\bbR^3$ for the {\em below-point queries}: Given a query point, report all planes below it. As shown by Chan~\cite[Corollary 2.5]{ref:ChanRa00}, this problem can be reduced (via a standard ``guessing'' trick on output size) to the following {\em $k$-lowest-planes queries}: Given a vertical line $\ell$ and a number $k$, report the $k$ lowest planes of $H$ at $\ell$. Notably, the $k$-nearest neighbor queries among a dynamic set of 2D points can be reduced to this problem by a standard lifting transformation~\cite{ref:deBergCo08}. 

In what follows, we focus on solving the dynamic $k$-lowest-planes query problem. We describe two solutions that offer different trade-offs between query and update times.

\subsection{The first solution}
\label{sec:sol1}
Using Chan’s data structure~\cite{ref:ChanA10,ref:ChanDy20} reviewed in Section~\ref{sec:chanreview}, it is possible to achieve query time $O(k\log^2 n)$, with the same update times as before, i.e., $O(\log^2 n)$ amortized insertion time and $O(\log^4 n)$ amortized deletion time. Recall that Chan’s data structure $\calD(H)$ consists of $O(\log n)$ substructures, each of which contains $O(\log n)$ vertical shallow cuttings.

The key observation~\cite{ref:ChanA10} is that if a plane $h$ is among the $k$ lowest planes of $H$ and belongs to a substructure $D$ of $\calD(H)$, then $h$ must appear in the conflict list of the cell intersecting $\ell$ in the shallow cutting of $D$ at level $j_k = \log(cn/k)$, for a suitable constant $c$. At this level, the conflict list contains $O(k)$ planes~\cite{ref:ChanA10}.

In light of this, we can answer the query as follows. For each substructure $D$ of $\calD(H)$, we locate the cell intersecting $\ell$ in the shallow cutting of $D$ at level $j_k$ and retrieve all non-deleted planes in its conflict list. Since each conflict list contains $O(k)$ planes, we obtain $O(k \log n)$ candidate planes in total across all substructures. Among them, we return the $k$ lowest planes intersecting $\ell$. The total query time is thus $O((k + \log n) \log n)$.

We can reduce the deletion time by replacing Chan’s data structure with our new one. This yields $O(\log^2 n)$ amortized insertion time and $O(\log^3 n \log\log n)$ amortized deletion time. Since our new data structure contains $O(\log^2 n / \log\log n)$ substructures, the query time now becomes $O((k + \log n) \log^2 n / \log\log n)$. This result is particularly appealing when one aims to minimize the highest logarithmic factors among insertion, deletion, and query times.

\subsection{The second solution}
In certain applications, one may desire a query time with a multiplicative factor of $k$ as small as possible, and ideally a query time of $O(f(n)+k)$ for some logarithmic function $f(n)$. Such a result is possible~\cite{ref:ChanTh12,ref:BergDy23}. In particular, based on Chan's dynamic lower envelope data structure, de Berg and Staals~\cite{ref:BergDy23} developed a data structure of $O(\log^2 n/\log\log n+k)$ query time, $O(\log^{3+\epsilon} n)$ amortized insertion time, and $O(\log^{5+\epsilon} n)$ amortized deletion time, for any arbitrarily small constant $\epsilon>0$. 

We present a new data structure with $O(\log^3 n/\log\log n+k)$ query time, $O(\log^2 n\log\log n)$ amortized insertion time, and $O(\log^4 n\log\log n)$ amortized deletion time. Comparing to the result of de Berg and Staals, ours has smaller highest logarithmic factors among the update and query times, while still achieving constant multiplicative factor in $k$.
We mostly follow the framework of de Berg and Staals, but with two key differences. First, we use our new dynamic lower envelope data structure in place of Chan’s original structure. Second, we make more efficient use of the shallow cutting algorithm by Chan and Tsakalidis~\cite{ref:ChanOp16} in one of the subroutines (i.e., Lemma~\ref{lem:delonly} below). Since our improvement is not just a black-box replacement of Chan’s algorithm, we outline the details of our approach below. 

Recall that our data structure $\calD^*(H)$ consists of $O(\log n/\log \log n)$ recursion levels, where each level has $O(\log n)$ substructures, and 
each substructure has $O(\log n)$ vertical shallow cuttings. For each cell $\triangle$ of the shallow cuttings, we use a deletion-only data structure in Lemma~\ref{lem:delonly} to maintain the live planes in its conflict list and we refer to the data structure as the {\em conflict list data structure}. 
The lemma slightly improves the result of \cite[Lemma~6]{ref:BergDy23}. 
While our algorithm follows the same high-level approach, the improvement stems from a key observation: the shallow cutting algorithm of Chan and Tsakalidis~\cite{ref:ChanOp16} can be applied more efficiently. For completeness, we present the details in the proof.

\begin{lemma}\label{lem:delonly}%{\em\cite[Lemma~6]{ref:BergDy23}}
For any $r\leq n$, there is a data structure %of $O(n\log r)$ size 
to maintain a set $Q$ of $n$ planes to support $O(r\log r)$ amortized time deletions and $O(\log r + n/r  +k)$ time $k$-lowest-planes queries. If a set $Q$ of $n$ planes is given initially, the data structure can be constructed in $O(n\log r)$ time. 
\end{lemma}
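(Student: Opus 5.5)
We will build a hierarchy of vertical shallow cuttings for $Q$, using the Chan--Tsakalidis algorithm~\cite{ref:ChanOp16} and stopping after $O(\log r)$ levels. Concretely, for $j=0,1,\ldots,\log r$ we would compute a $(n/2^j\cdot (1/r)\cdot r, \ldots)$-type hierarchy; more precisely, we want level-$j$ cuttings $\Lambda_j$ that are $(k_j,O(k_j))$-vertical shallow cuttings of $Q$ with $k_j=2^j n/r$ and size $O(n/k_j)=O(r/2^j)$, for $j=0,\ldots,\log r$. The Chan--Tsakalidis algorithm computes all cuttings of such a geometric hierarchy simultaneously, together with their conflict lists. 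The key point is that we only need the hierarchy from $k=n/r$ up to $k=n$, i.e.\ $O(\log r)$ levels. We expect the construction to take $O(n\log r)$ time, since their algorithm spends $O(n)$ per level: each level's cutting is derived from the previous one, and the conflict lists at each level have total size $O(n)$. We would also build a point-location structure on the $xy$-projection of the finest cutting $\Lambda_0$, which has $O(r)$ cells; it supports queries in $O(\log r)$ time. From each cell we store pointers to the cells of the coarser levels containing it (or we locate those cells separately, also in $O(\log r)$ total time via the hierarchy's nesting/pointer structure).

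For a query $(\ell,k)$ with $k\le n/r$, we locate the cell of $\Lambda_0$ crossed by $\ell$. We scan its conflict list, of size $O(n/r)$, discarding deleted planes. The $k$ lowest planes at $\ell$ lie in this list, since the cell covers the $(\le n/r)$-level as long as enough live planes remain (see the deletion rule below). We then select the $k$ lowest planes in linear time, giving $O(\log r+n/r)$. For $k>n/r$, we use the level $j$ with $k_{j-1}<k\le k_j$, whose conflict list has size $O(k)$, and select in $O(k)$ time. The total query time is $O(\log r+n/r+k)$.

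Deletions are where the main difficulty lies, because deleting planes lowers levels and may cause a cell to no longer cover the required shallow level. We would follow de Berg and Staals' scheme. Each cell $\triangle$ at level $j$ keeps a counter of deleted planes in its conflict list. We only mark planes as deleted and never insert, so the set of live planes below any point only shrinks. Consequently, a cell covering the $(\le k_j)$-level of the original set still covers the $(\le k_j)$-level of the current set. Hence correctness of the candidate lists is automatic, and the only issue is the query cost wasted on deleted planes. To control this, we rebuild the whole structure on the live planes after $n/2$ deletions, which costs $O(n\log r)$ and is amortized $O(\log r)$ per deletion. Between rebuilds, the deleted planes inflate a scanned list by at most its original $O(k_j)$ size, which is harmless. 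When $n$ shrinks, this periodic global rebuild keeps the parameters valid.

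The stated $O(r\log r)$ amortized deletion bound is weaker than what this argument gives, so the budget suffices. The remaining subtlety is that the query needs at least $k$ live planes in the list. Cells in which many deletions have occurred could leave fewer than $k$ live candidates while planes outside the list are also among the lowest. We would handle this by rebuilding locally when a cell's counter reaches half its threshold, i.e.\ recomputing that part of the structure. Alternatively, we could charge the $O(r)$ per-deletion work to cover touching up to $O(\log r)$ affected cells, each rebuilt at cost $O(r)$ after a constant fraction of its planes are deleted. We expect this step to be the main obstacle, so we would follow \cite[Lemma~6]{ref:BergDy23} for it, replacing only the construction by the truncated Chan--Tsakalidis hierarchy.
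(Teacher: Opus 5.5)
\paragraph{Verdict: genuine gap in the deletion argument.}
Your construction and query procedure agree with the paper's. You truncate the Chan--Tsakalidis hierarchy to $O(\log r)$ levels and build it in $O(n\log r)$ time. A query does one $O(\log r)$-time point location in the cutting whose threshold is $\Theta(k)$ (or in the finest one when $k\le n/r$), then linear-time selection in that cell's conflict list.

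The gap is in deletions, which is exactly the part that produces the stated bound. Your claim that a cell covering the $(\le k_j)$-level of the original set still covers the $(\le k_j)$-level of the current set is backwards. Deleting planes \emph{decreases} levels, so the current $(\le k_j)$-level contains the original one and can rise above the top of a cell. For instance, $O(n/r)$ deletions suffice to remove every plane that passes below the top of some finest-level cell $\triangle$ on a line $\ell$. After that, the lowest live plane at $\ell$ need not be in $\triangle$'s list, and even a $k=1$ query fails. A global rebuild after $n/2$ deletions does nothing to prevent this.

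You do notice the issue afterwards, but the remedies you sketch do not work as stated. A cell cannot be rebuilt locally from its own conflict list, because the planes needed to re-cover the level above it are precisely the ones \emph{not} in that list. There is also no $O(r)$-cost local rebuild to charge against. Deferring to de Berg and Staals leaves the step unproved. Separately, a plane can cross many cells of a single cutting, so a deletion may touch $O(r)$ cells rather than $O(\log r)$; this still fits the budget.

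\paragraph{The missing idea: a global rebuild triggered by any single depleted cell.}
This is what the paper uses.
\begin{itemize}
\item Keep, for each cell $\triangle$, the number $d_\triangle$ of deleted planes in its conflict list.
\item The top of a level-$j$ cell lies on or above the original $k_j$-level over its whole projection. Every plane passing below the top on a vertical line through the cell crosses the cell. Hence the list of $\triangle$ still contains the $k_j-d_\triangle$ lowest live planes on every such line.
\item So rebuild the whole structure on the live planes as soon as some cell has lost a constant fraction of its list (say $d_\triangle\ge k_j/2$). Answer a query with $k$ at a level where $k\le k_j/2$, which is a constant-factor shift in the choice of level. Correctness follows.
\item Every conflict list has at least $k_0=n/r$ planes, so at least $\Omega(n/r)$ deletions separate consecutive rebuilds. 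An $O(n\log r)$ rebuild therefore amortizes to $O(r\log r)$ per deletion.
\end{itemize}
In particular, your remark that the $O(r\log r)$ budget is weaker than what your argument gives is mistaken. The factor $r$ is exactly the price of finest-level cells, each with only $\Theta(n/r)$ planes, becoming invalid after $\Theta(n/r)$ deletions.
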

\begin{proof}
We build a sequence of $(n/b^i,cn/b^i)$-vertical shallow cuttings $\Xi_i$ for $i=1,2,\ldots, t$, for some constants $b$ and $c$. We choose $t$ such that $b^t\leq r<b^{t+1}$, which implies that $t=O(\log r)$ and $n/b^t\geq n/r$. Using the algorithm of Chan and Tsakalidis~\cite{ref:ChanOp16}, there exist constants $b$ and $c$ such that computing all these cuttings can be done in $O(n\log r)$ time (improving upon the $O(n\log n)$ time used in \cite[Lemma~6]{ref:BergDy23}). 

To delete a plane $h$, we delete $h$ from the conflict lists of the cells of all shallow cuttings that contain $h$. 
When more than half of the planes from a conflict list are deleted, we rebuild the entire structure. Since the conflict list of each cell has at least $n/b^t\geq n/r$ planes, at least $n/2r$ deletions must occur before a rebuild is triggered. As a rebuild takes $O(n\log r)$ time, the amortized deletion time is $O(r\log r)$ (improving upon the $O(r\log n)$ bound in \cite[Lemma~6]{ref:BergDy23}).

Given a query vertical line $\ell$ and a number $k\leq n$, let $i$ be the integer such that $n/b^{i+1}<k\leq n/b^i$. 
Hence, $k=\Theta(n/b^i)$. 

\begin{itemize}
\item 
If $i=0$, then $k=\Theta(n)$. In this case, we can simply find the $k$ lowest planes at $\ell$ in $O(k)$ time using a linear-time selection algorithm. 

\item 
If $1\leq i\leq t-1$, then we compute the cell $\triangle$ of $\Xi_i$ that intersects $\ell$, which takes $O(\log r)$ time as the number of cells of $\Xi_i$ is $O(b^i)$, which is $O(r)$. By definition, the lowest $n/b^i$ planes at $\ell$ must be in the conflict list of $\triangle$ (because $\Xi_i$ covers the $n/b^i$-level of $Q$). Since $k\leq n/b^i$, the $k$ lowest planes at $\ell$ in the conflict list must be the $k$ lowest planes of $Q$. Therefore, we find the $k$ lowest planes at $\ell$ in the conflict list of $\triangle$ and return them as the answer to the query. Since the conflict list of $\triangle$ has $O(n/b^i)$ planes, the above takes $O(n/b^i)$ time, which is $O(k)$ as $k=\Theta(n/b^i)$.

\item 
If $i\geq t$, then $n/b^t\geq k$. We compute the cell $\triangle$ of the shallow cutting $\Xi_t$ that intersects $\ell$, which takes $O(\log r)$ time. As above, the $k$ lowest planes at $\ell$ in the conflict list of $\triangle$ must be the $k$ lowest planes of $Q$. Therefore, 
we find the $k$ lowest planes at $\ell$ in the conflict list of $\triangle$ and return them as our answer. Since the conflict list size is $O(n/b^t)$, which is $O(n/r)$,
%since $n/b^t\geq n/r$, 
the time for searching the conflict list is $O(n/r)$.
\end{itemize}

Combining the above three cases, the total query time is $O(\log r+n/r+k)$.
\end{proof}

\paragraph{Queries.}
Given a query line $\ell$ and a number $k\leq n$, the query algorithm proceeds as follows. For each substructure $D$ of $\calD^*(H)$, we locate the cell $\triangle$ intersecting $\ell$ in the $j_k$-th level of the shallow cutting of $D$, which takes $O(\log n)$ time. 
Since each conflict list associated with such a cell contains $O(k)$ planes and there are $O(\log^2 n/\log\log n)$ substructures, we gather $O(\log^2 n/\log\log n)$ conflict lists of size $O(k)$ each, in a total of $O(\log^3n/\log\log n)$ time (without explicitly examining the contents of these lists yet). 

Next, we find the $k$ lowest planes among all planes across these conflict lists. For this, we resort to a technique of querying multiple data structures simultaneously in \cite[Theorem~1]{ref:BergDy23}, which is based on an adaption of the heap selection algorithm of Frederickson~\cite{ref:FredericksonAn93}. Suppose we have a data structure for each conflict list that can answer a $k$-lowest-planes query in $O(\tau(m)+k)$ time, where $m$ is the size of the conflict list. Then, the technique can find the $k$ lowest planes among all conflict lists in $O(k+\tau(m)\cdot t)$ time, where $t$ is the number of conflict lists. 

In our setting, $t=O(\log^2 n/\log\log n)$ and $\tau(m)=\log r+m/r$ by Lemma~\ref{lem:delonly}. Furthermore, we set $r=\log^2 n/\log\log n$ when applying Lemma~\ref{lem:delonly}. Since each conflict list size is $m=O(k)$, we get $\tau(m)=O(\log\log n+k\log\log n/\log^2 n)$. 
Plugging these into the above $O(k+\tau(m)\cdot t)$ time complexity, we obtain that the time is $O(k+(\log\log n+k\log\log n/\log^2 n)\cdot \log^2 n/\log\log n)$, which is $O(k+\log^2 n)$. 

In summary, the overall query time is bounded by $O(k+\log^3n/\log\log n)$.

\paragraph{Insertions.}
To insert a plane $h$, we still periodically rebuild the data structure $\calD^*(H)$ as before. The difference now is that, during the rebuild, we also construct the conflict list data structures described in Lemma~\ref{lem:delonly}, which introduces an additional $\log r = O(\log\log n)$ factor in the preprocessing time. Specifically, for a set $H$ of $n$ planes, building $\calD^*(H)$ along with all conflict list data structures requires $O(n \log n \log \log n)$ time.

As a result, the insertion time follows the recurrence $T(n) = T(3n/4) + P(n)/n$, where $P(n) = O(n \log n \log \log n)$ is the preprocessing time. Solving this recurrence yields an amortized insertion time of $O(\log^2 n \log \log n)$.

\paragraph{Deletions.}
To delete a plane $h$, we follow the same algorithm as before. In addition, we also perform deletions from the conflict list data structures. Specifically, recall that each plane is live in exactly one substructure and it appears in at most $O(\log n)$ conflict lists of the substructure (i.e., in the conflict lists of $O(\log n)$ cells of the shallow cuttings of the substructure). Therefore, we need to delete $h$ from $O(\log n)$ conflict list data structures. 
Furthermore, deleting $h$ triggers $O(\log n\log\log n)$ re-insertions. For each of these re-insertions $h'$, we also need to delete $h'$ from $O(\log n)$ conflict list data structures that contain $h'$.  

As a result, the deletion time for $h$ is bounded by $O(t\cdot (I(n)+D(n)\cdot \log n))$, where $t$ is the number of re-insertions, $I(n)$ is the insertion time, and $D(n)$ is the deletion time of the conflict list data structure. We already know that $t=O(\log n\log\log n)$ and $I(n)=O(\log^2 n\log\log n)$. By Lemma~\ref{lem:delonly}, $D(n)=O(r\log r)$, which is $O(\log^2 n)$ since $r=\log^2n/\log\log n$. Hence, the amortized deletion time is bounded by $O(\log^4 n\log\log n)$.

\subsection{Summary, applications, and extensions}

The following theorem summarizes our results in this section. 

\begin{theorem}\label{theo:rangereport}
We can maintain a dynamic set of $n$ planes in $\bbR^3$, with $O(\log^2 n)$ amortized insertion time and $O(\log^3n\log\log n)$ amortized deletion time, so that each $k$-lowest-planes query can be answered in $O((\log n+k)\log^2 n/\log\log n)$ worst-case time; alternatively, we can achieve $O(\log^2 n\log\log n)$ amortized insertion time, $O(\log^4n\log\log n)$ amortized deletion time, and $O(k+\log^3 n/\log\log n)$ worst-case query time.
\end{theorem}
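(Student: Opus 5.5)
The proof assembles results already established in this section; there are two independent trade-offs, and I will verify each.

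\textbf{First trade-off.} I will use $\calD^*(H)$ from Section~\ref{sec:new} unchanged. Theorem~\ref{theo:plane} then gives the update bounds directly: $O(\log^2 n)$ amortized insertion (Lemma~\ref{lem:insert-time}) and $O(\log^3 n\log\log n)$ amortized deletion. No auxiliary structures are added, so the charging argument for type-1 and type-2 deletions carries over verbatim.

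For queries, I follow Section~\ref{sec:sol1}. The key fact, due to Chan, is the following: if a plane $h\in\live(D)$ is among the $k$ lowest planes of $H$ at $\ell$, then it is also among the $k$ lowest planes of $\inp(D)$ at $\ell$. Hence $h$ lies in the conflict list of the cell of the level-$j_k$ cutting of $D$ that is crossed by $\ell$, and this list has size $O(k)$.

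The one point that needs care is that $\calD^*(H)$ has many substructures. There are $O(\log n/\log\log n)$ recursion levels, each containing a Chan structure with $O(\log n)$ substructures, for $O(\log^2 n/\log\log n)$ substructures in total. For each one, a point location costs $O(\log n)$ and scanning the list costs $O(k)$. The $k$ smallest of the gathered candidates are then found by linear-time selection. The total is $O((\log n+k)\log^2 n/\log\log n)$.

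Correctness holds because the structures at different recursion levels hold disjoint sets $H_i^+$ that partition $H$. Applying Chan's argument to each $\calD(H_i^+)$ separately, every one of the true $k$ lowest planes is collected from the structure in which it is live.

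\textbf{Second trade-off.} I equip every cutting cell with the conflict-list structure of Lemma~\ref{lem:delonly}, taking $r=\log^2 n/\log\log n$.

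\emph{Queries.} I combine the $t=O(\log^2 n/\log\log n)$ located lists using the simultaneous-query technique of de Berg and Staals, which runs in $O(k+\tau(m)t)$ time. Here $\tau(m)=\log r+m/r$ and $m=O(k)$. Substituting gives $O(k+\log^2 n)$, and adding the location cost gives $O(k+\log^3 n/\log\log n)$.

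\emph{Insertions.} The preprocessing now includes an extra $O(\log r)=O(\log\log n)$ factor, so $P(n)=O(n\log n\log\log n)$. The recurrences in Lemma~\ref{lem:insert-time} then yield $O(\log^2 n\log\log n)$.

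\emph{Deletions.} Each deletion triggers $O(\log n\log\log n)$ amortized re-insertions. Each re-insertion costs $I(n)$ plus deletions from $O(\log n)$ conflict-list structures, each costing $O(r\log r)=O(\log^2 n)$. This gives $O(\log n\log\log n)\cdot O(\log^2 n\log\log n+\log^3 n)=O(\log^4 n\log\log n)$.

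\textbf{Main obstacle.} The delicate step is confirming that the deletion charging scheme (type-1 and type-2 deletions, with partial credits across recursion levels) still holds once insertion costs $O(\log^2 n\log\log n)$. That argument only counts re-insertions, so I expect it to go through unchanged. I would state this explicitly, and also check that the conflict-list rebuilds of Lemma~\ref{lem:delonly} are charged to deletions within the same cell and are never double-counted against the global rebuilds.
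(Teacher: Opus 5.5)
\textbf{There is a genuine gap, and it is the same flaw the paper retracts.} Your accounting for both trade-offs follows the paper's own argument in Section~\ref{sec:rangereport} almost step for step. The gap sits exactly where you wrote that correctness ``holds because the structures at different recursion levels hold disjoint sets $H_i^+$ that partition $H$.'' That invariant is false, and it is the flaw recorded in the paper's appendix, which withdraws Theorems~\ref{theo:plane} and~\ref{theo:surface}. This theorem is built on the same $\calD^*(H)$ and the same $O(\log n\log\log n)$ re-insertion bound, so it is not established either.

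Here is how the invariant fails. When {\tt Chan-Delete}$(H^+,h)$ purges a plane $h'$ from one conflict list and calls {\tt Insert}$(H,h')$, $h'$ leaves only that list. It stays in every other conflict list of $\calD(H^+)$ that contains it. If $h'$ is later deleted, {\tt Delete} sees $h'\in H^-$ and recurses only into $H^-$. The counters $\kappa_{\triangle'}$ of those other cells are therefore never incremented on account of $h'$.

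\textbf{Why this breaks your query argument.} Your ``key fact'' is a static statement. In the dynamic structure, the conclusion that a live plane $g$ of $D$ among the $k$ lowest at $\ell$ lies in the level-$j_k$ conflict list hit by $\ell$ is proved by induction over the cuttings of $D$. Each step uses that every vanished plane of $(H_i)_{\triangle}$ below $g\cap\ell$ was charged to $\kappa_{\triangle}$; otherwise $g$ would already have been purged. Planes that leave through $H^-$ bypass this charging. More planes than the purge threshold can thus disappear below $g\cap\ell$ while $g$ stays live. Then $g\cap\ell$ need not be covered by the deeper cuttings, and both of your query procedures can miss $g$.

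\textbf{Why the obvious repair does not save the bounds.} You could also run {\tt Chan-Delete} on every Chan-structure whose lists still contain the plane. That restores query correctness. But a single deletion may then run {\tt Chan-Delete} in Chan-structures at several recursion levels. The type-1/type-2 charging no longer bounds the re-insertions by $O(\log n\log\log n)$, and, as the appendix notes, only the old $O(\log^4 n)$-type deletion bound is then available. So neither trade-off's update bounds follow.

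The obstacle you flagged, the credit scheme under costlier insertions, is not the real problem; that part only counts re-insertions and does go through. The missing ingredient is a way to keep the conflict-list counters of $\calD(H_i^+)$ consistent with deletions that are routed through $H_i^-$, without paying for them at every recursion level.
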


\paragraph{Applications.} Theorem~\ref{theo:rangereport} has many immediate applications.  
As mentioned above, the below-point queries can be reduced to the $k$-lowest-planes queries~\cite[Corollary 2.5]{ref:ChanRa00}. Therefore, we can obtain the same result for the below-point queries as Theorem~\ref{theo:rangereport}, with $k$ in the query time representing the output size. In the dual setting, this is equivalent to maintaining a dynamic set of points in $\bbR^3$ to support halfspace range reporting queries (the title of this section).

Furthermore, we can also solve the $k$-nearest neighbor queries for a dynamic set of 2D points. Indeed, by a standard lifting transformation, the problem reduces to the $k$-lowest-planes queries for a dynamic set of planes in $\bbR^3$, and therefore Theorem~\ref{theo:rangereport} can be applied directly.

In addition, we can also solve the following disk range queries for a dynamic set of 2D points: given a query disk, report all points inside the disk. By the lifting transformation, the problem reduces to the below-point queries for a dynamic set of 3D planes. We note that for the special case where all query disks are unit disks (i.e., have the same known radius), it is possible to achieve an improved query time of $O(k+\log n)$~\cite{ref:WangDy25}.

\paragraph{Extensions to the $\boldsymbol{k}$-lowest-surfaces queries.}
We can also extend our result to support $k$-lowest-surfaces queries for the general problem setting described in Section~\ref{sec:surface}. Building on Chan’s dynamic lower envelope framework and the observation discussed in Section~\ref{sec:sol1}, previous work~\cite{ref:KaplanDy20,ref:LiuNe22} achieved amortized expected insertion and deletion times of $O(\log^2 n)$ and $O(\log^4 n)$, respectively, along with a query time of $O(k \log^2 n)$.

Using our new data structure for Theorem~\ref{theo:surface}, we can obtain the same query and update time bounds as in the first solution of Section~\ref{sec:sol1}, except that both the insertion and deletion times are expected.

Extending the second solution to the surface case presents a technical challenge: it is unclear whether Liu’s algorithm~\cite{ref:LiuNe22} can compute all the required shallow cuttings in the proof of Lemma~\ref{lem:delonly} within $O(n \log r)$ expected time. If it can, then the second solution from Theorem~\ref{theo:rangereport} directly applies to the surface case as well, again with expected update times. Otherwise, Liu’s algorithm can still compute these cuttings in $O(n \log n)$ expected time, which increases the preprocessing time of Lemma~\ref{lem:delonly} to $O(n \log n)$ and the amortized deletion time to $O(r \log n)$. Consequently, the amortized insertion and deletion times increase to $O(\log^3 n)$ and $O(\log^5 n)$, respectively.

As noted in previous work~\cite{ref:KaplanDy20,ref:LiuNe22}, an immediate application of these results is to answer $k$-nearest neighbor queries for a dynamic set of 2D points under general distance functions.

%\footnotesize
\bibliographystyle{plainurl}
\bibliography{reference}

\section*{Appendix: A note on the incorrectness of the main result}
\label{sec:app}

This note concerns an error in this paper recently published in~\cite{ref:WangDy26}. After publication, a flaw was identified in the main algorithm. Attempts to correct the issue have so far been unsuccessful, and as a result, the main result of the paper (Theorems \ref{theo:plane} and \ref{theo:surface}) is not currently supported. The purpose of this note is to document the issue so that readers are aware of it.

The issue arises in the deletion algorithm {\tt Delete($H,h$)} described in Section~\ref{sec:deletion}. After a plane $h$ is deleted using the procedure {\tt Chan-Delete($H^+,h$)}, some planes in $H^+$ may be reinserted into $H^-$. However, these planes may still remain in $H^+$, that is, they may still appear in the conflict lists of certain cutting cells of $\calD(H^+)$. Consequently, it is possible for a plane to belong to both $H^+$ and $H^-$. The deletion algorithm, however, is designed under the assumption that $H^+ \cap H^- = \emptyset$. This violation of the assumption causes the query algorithm to be incorrect.

One possible approach to addressing this issue is to modify Algorithm~\ref{algo:delete} as follows. If a plane $h$ belongs to both $H^-$ and $H^+$, one could first call {\tt Chan-Delete($H^+,h$)} and then  call {\tt Delete($H^-,h$)} recursively. With this modification, the query algorithm would behave correctly. However, the deletion time would increase to $O(\log^4 n)$ (or, at the very least, it is unclear whether a more refined analysis would improve the time bound), matching the deletion time of previous work~\cite{ref:ChanA10,ref:ChanDy20,ref:KaplanDy20,ref:LiuNe22}. As a result, this approach does not preserve the main performance improvement claimed in the paper (Theorems \ref{theo:plane} and \ref{theo:surface}).

\paragraph{Acknowledgment.}
I would like to thank Hanwen Zhang (University of Copenhagen) for pointing out the issue and for helpful discussions regarding possible fixes.

\end{document}